\title[ ]{  Small  denominators and  large numerators of quasiperiodic Schr\"odinger  operators }
\author{Wencai Liu}
\address[W. Liu]{ Department of Mathematics, Texas A\&M University, College Station, TX 77843-3368, USA} \email{liuwencai1226@gmail.com; wencail@tamu.edu}
\newcommand{\R}{\mathbb{R}}
\newcommand{\Z}{\mathbb{Z}}
\newcommand{\Q}{\mathbb{Q}}
\theoremstyle{plain}
\newtheorem{theorem}{Theorem}[section]
\newtheorem{lemma}[theorem]{Lemma}
\theoremstyle{definition}
\begin{document}

	
	\begin{abstract}
		
	 	We initiate an approach to simultaneously treat numerators and denominators of Green's functions arising from quasi-periodic Schr\"odinger operators, which in particular  allows us to  study  completely resonant phases of the almost Mathieu operator.
		
		Let  $ (H_{\lambda,\alpha,\theta}u) (n)=u(n+1)+u(n-1)+  2\lambda \cos2\pi(\theta+n\alpha)u(n)$ be
		the  almost Mathieu operator on $\ell^2(\mathbb{Z})$, where $\lambda, \alpha, \theta\in \mathbb{R}$.
   Let   $$	\beta(\alpha)=\limsup_{k\rightarrow \infty}-\frac{\ln ||k\alpha||_{\mathbb{R}/\mathbb{Z}}}{|k|}.$$
		We prove that   for any $\theta$ with $2\theta\in \alpha \mathbb{Z}+\mathbb{Z}$, 
		$H_{\lambda,\alpha,\theta}$
		satisfies Anderson localization if $|\lambda|>e^{2\beta(\alpha)}$.
		This confirms a conjecture of 	Avila and Jitomirskaya [The Ten Martini Problem. Ann. of Math. (2) 170 (2009), no. 1, 303--342] and a particular case of a conjecture of Jitomirskaya [Almost everything about the almost Mathieu operator. II. XIth International Congress of Mathematical Physics (Paris, 1994), 373--382, Int. Press, Cambridge, MA, 1995].

	\end{abstract}
	\maketitle
	
	\section{Introduction}
	In this paper, we   study  one-dimensional quasiperiodic Schr\"odinger operators 
	$H = H_{\lambda v,\alpha,\theta}$ defined on  $\ell^2(\Z)$:
	\begin{equation}\label{op}
		(H_{\lambda v,\alpha,\theta}u)(n)=u({n+1})+u({n-1})+ \lambda  v(\theta+n\alpha)u(n),  
	\end{equation}
	where $v : \R/\Z\to \R$ is the potential, $\lambda $ is the coupling constant, $\alpha \in \R  \backslash \Q $ is the frequency, and $\theta \in \R$ is the phase. 
	One of the  most studied  examples in both mathematics and physics  is  the almost Mathieu operator (denoted by $H_{\lambda,\alpha,\theta}$), where  $v(\theta) = 2 \cos(2 \pi \theta).$

	There are two small   denominator problems arising from quasi-periodic  Schr\"odinger operators. 
	In the regime of small  coupling constants, the  operator  is close to the   free  discrete Schr\"odinger operator and thus,  it is natural to  establish  the reducibility to a constant cocycle by writing down the eigen-equation $Hu=Eu$ as a dynamical system on $ (\R/\Z,\R^2)$: 
	\begin{equation}\label{g122}
		(\theta,w) \to (\theta+\alpha, A(\theta) w ),
	\end{equation}
	where $A(\theta)=\left(
	\begin{array}{cc}
		E-\lambda v (\theta )& -1 \\
		1& 0\\
	\end{array}
	\right)$ 
	is referred  to as Schr\"odinger cocycle of  ~\eqref{op}. 
	In the regime of large coupling constants, the operator  can be viewed as a perturbation of a purely  diagonal matrix with dense eigenvalues. 
	In both regimes, 
  spectral theory of quasi-periodic  operators  has    seen  significant progress through earlier  perturbative methods ~\cite{ds75,eli,fsw,fs,si87,cs89,eli97}, and then
	non-perturbative methods   ~\cite{j94,j,bj,bgreen05,bg20,hy12,afk11,gs08,gy20}. 
	We refer readers to
	~\cite{jmarx,you,jlz20,jcdm,sch1} and
	references therein for more details.   
	
	In this paper, we are interested in the   regime of large   coupling constants.  In this regime,   small denominator problems essentially become  problems of dealing with resonances coming from phases and frequencies ~\cite{bgs,bg20,fsw,j,jl18,jl2,bj,gs08}. 
	By avoiding the resonance (treating the small denominator)  that usually is achieved by imposing arithmetic conditions on phases and frequencies, it is expected that the operator exhibits Anderson localization (pure point spectrum with exponentially decaying eigenfunctions). 
	Recently,
  several remarkable sharp arithmetic transitions between singular continuous spectrum and pure point spectrum were obtained ~\cite{jl18,jl2,jlmary}. 
	However,  all localization proofs   ~\cite{j94,aj10,jl18,jl2,lyjfa,lyjst,liuetds20,j,bj,bgreen05,bg20,aj09,jy} simply bound 
	the numerators of  Green's functions of  the quasiperiodic Schr\"odinger operators   above  by the Lyapunov exponent of 
	the corresponding dynamical system \eqref{g122}.

The main goal of this paper is to initiate   an approach to   simultaneously treat   numerators  and   denominators  in the study of resonances arising from analytic quasi-periodic Schr\"odinger operators. As a starting point, we   focus on a particular case, namely, completely resonant phases for the almost Mathieu operator,  
in which   phase resonances and frequency resonances    overlap. As we were finalizing this paper we learned of a preprint \cite{HJY}  where the study of the numerators also plays an important  role, but in a completely different setting of unbounded potentials and  the so-called anti-resonances. 



	We say a  phase $\theta\in \mathbb{R}$ is completely resonant with respect to a  frequency $\alpha$ if $2\theta\in \alpha \mathbb{Z}+\mathbb{Z}$. One of the motivations   to study  completely resonant phases   is their critical roles in the spectral theory of quasiperiodic Schr\"odinger operators, for instance, spectral gap edges ~\cite{jfoot,puig04}. Another motivation comes from a conjecture of 
	Avila and Jitomirskaya in  ~\cite{aj09}  and Jitomirskaya ~\cite{ji94}.
	
	{\bf Conjecture 1:} Avila and Jitomirskaya  ~\cite{aj09} conjectured   that the almost Mathieu operator  $H_{\lambda,\alpha,\theta}$ satisfies Anderson localization  if  $ \ln |\lambda| >  2 \beta(\alpha)$ and  $2\theta\in \alpha \mathbb{Z}+\mathbb{Z}$,
	where
	\begin{equation}\label{gresf}
		\beta(\alpha)=\limsup_{k\rightarrow \infty}-\frac{\ln ||k\alpha||_{\R/\Z}}{|k|},
	\end{equation}
	and  $ ||x||_{\R/\Z}={\rm dist} (x,\Z) $.
	

%
%
%

	Conjecture 1  is also  a particular case of a conjecture  of Jitomirskaya  ~\cite{ji94}.
	In this paper, we prove Conjecture 1 as it is.
	\begin{theorem}\label{mainthm}
		Assume that   $\alpha$  satisfies $\beta (\alpha)<\infty$.  Then the   almost Mathieu operator
		$H_{\lambda,\alpha,\theta}$ satisfies Anderson localization  if $2\theta\in \alpha \mathbb{Z}+\mathbb{Z}$  and  $ \ln |\lambda| > 2\beta(\alpha)$. 
		Moreover, if $\phi$ is an eigenfunction, that is $H_{\lambda,\alpha,
			\theta}\phi=E\phi$ with $\phi\in\ell^2(\Z)$, we have
		\begin{equation*}
			\limsup_{k\to \infty} \frac{\ln (\phi^2(k)+\phi^2(k-1))}{2|k|} \leq -(\ln |\lambda|-2\beta(\alpha)).
		\end{equation*}
	\end{theorem}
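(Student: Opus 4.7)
The plan is to follow the Schnol--Jitomirskaya scheme: it suffices to prove that every polynomially bounded solution $\phi$ of $H_{\lambda,\alpha,\theta}\phi=E\phi$ decays exponentially at the stated rate, since then a standard argument yields pure point spectrum with exponentially decaying eigenfunctions (and, for any such $E$, membership in the spectrum is automatic). Throughout, fix $2\theta\in\alpha\Z+\Z$, $\ln|\lambda|>2\beta(\alpha)$, and a small $\epsilon>0$.

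\textbf{Reduction to Green's function estimates.} For any interval $I=[a,b]\subset\Z$ with $0\notin I$ and $k\in I$, the block-resolvent (Poisson) identity
\begin{equation*}
\phi(k)=-G_I(k,a)\phi(a-1)-G_I(k,b)\phi(b+1),
\end{equation*}
combined with the polynomial bound $|\phi(n)|\leq C(1+|n|)^C$, reduces the desired decay to finding, for each sufficiently large $|k|$, an interval $I\ni k$ of length comparable to $|k|$ for which
\begin{equation*}
|G_I(k,a)|+|G_I(k,b)|\leq e^{-(\ln|\lambda|-2\beta(\alpha)-\epsilon)|k|}.
\end{equation*}
By Cramer's rule,
\begin{equation*}
|G_I(k,y)|=\frac{|P_{J_1}(E;\theta+a\alpha)|\,|P_{J_2}(E;\theta+(k+1)\alpha)|}{|P_I(E;\theta+a\alpha)|},
\end{equation*}
where $P_n(E;\theta)=\det(E-H_{[0,n-1]}(\theta))$ and $J_1,J_2$ are the two pieces of $I$ separated by the row/column at $y$.

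\textbf{Joint estimate on the ratio.} The classical approach bounds the numerator above by $e^{(\ln|\lambda|+\epsilon)|J_i|}$ via the Lyapunov exponent and the denominator below by $e^{(\ln|\lambda|-\epsilon)|I|}$, subject to choosing $a$ to avoid $\theta$-resonances. At completely resonant phases this fails: the condition $2\theta=k_0\alpha+m_0$ aligns the phase resonance axis $n\leftrightarrow-k_0-n$ with the frequency resonances occurring at scales $q_s$ of the continued fraction expansion of $\alpha$, forcing the denominator as small as $e^{(\ln|\lambda|-\epsilon)|I|-\beta(\alpha)|I|}$ for every choice of $I$ with $|I|\sim q_s$. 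The key new idea is to treat numerator and denominator as a single object. Viewing $\theta\mapsto P_n(E;\theta)$ as a trigonometric polynomial of degree $n$ with explicit top Fourier coefficient $\sim\lambda^n e^{\pi i(n-1)\alpha n}$, and exploiting the reflection symmetry the potential inherits at completely resonant $\theta$, I would establish a near-factorization
\begin{equation*}
P_{J_1}(E;\theta+a\alpha)\,P_{J_2}(E;\theta+(k+1)\alpha)=A(E,\theta)\,P_I(E;\theta+a\alpha)+R(E,\theta),
\end{equation*}
where $A$ is controllable at the leading Lyapunov order and $R$ consists of strictly lower-degree Fourier modes in which the same resonant mode causing the small denominator appears by symmetry. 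The ratio therefore equals $A+R/P_I$, in which the would-be $e^{\beta(\alpha)|I|}$ loss cancels and only the sharp rate $e^{-(\ln|\lambda|-2\beta(\alpha))|k|}$ survives.

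\textbf{Main obstacle.} The principal difficulty is making this cancellation quantitative, uniformly across the hierarchy of resonant scales. For each $k$ one must (i) select $I$ whose length matches the appropriate continued-fraction denominator $q_s$ relative to $|k|$, (ii) align $I$ as nearly as possible with the completely resonant symmetry axis while preserving $0\notin I$ and $k\in I$, and (iii) bound the remainder $R$ through a careful Fourier expansion of the determinants $P_n$, making use of a Chambers-type identity for the almost Mathieu operator. The sharp constant $2\beta(\alpha)$ then emerges from balancing the enforced $e^{\beta(\alpha)|I|}$ loss in the denominator against the identical gain supplied by the cancellation, together with a factor controlled by the distance from $k$ to the symmetry axis, which is comparable to $|k|$. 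Splicing these scale-by-scale estimates into a single global exponential bound valid for every $E$ in the spectrum, and ruling out a double loss when two resonant scales $q_s,q_{s+1}$ straddle $|k|$, is the most delicate bookkeeping.
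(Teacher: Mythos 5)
There is a genuine gap at the very center of your argument. Everything rests on the asserted ``near-factorization'' $P_{J_1}P_{J_2}=A\,P_I+R$ in which the resonant mode responsible for the small denominator is supposed to reappear in $R$ ``by symmetry'' so that the $e^{\beta(\alpha)|I|}$ loss cancels in the Cramer ratio. You give no derivation of such an identity, no reason why the remainder should be organized by Fourier degree in a way that produces the cancellation, and no mechanism for controlling $A$ at Lyapunov order; a Chambers-type identity controls the $\theta$-average (the top Fourier mode) of the trace of the $q_n$-step transfer matrix for rational approximants, but it does not yield a factorization of box determinants of three different lengths at three different phases, and the reflection symmetry of the potential at completely resonant $\theta$ relates determinants to each other without, by itself, producing cancellation in the ratio $P_{J_1}P_{J_2}/P_I$. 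Since this is exactly where the difficulty of the theorem lives, the proposal is a program rather than a proof. Your quantitative setup is also off in a telling way: with the Lagrange-interpolation argument the denominator on a well-chosen interval of length $\sim q_n$ (resp. $\sim 2q_n$) loses only $e^{\beta q_n}$ (resp. $e^{2\beta q_n}$), not a loss proportional to $|I|$; the true obstruction at completely resonant phases is that phase resonances occur both at $jq_n$ and at $jq_n+\frac{1}{2}q_n$, while the repetition-of-potential structure used in Gordon-type arguments is available only under shifts by $q_n$.

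For comparison, the paper does not cancel anything inside the Cramer ratio. It keeps the Lagrange-interpolation lower bound on the denominator (Lemma \ref{Le.Uniform} plus Claims 1 and 2), and instead improves the \emph{numerators}: Lemma \ref{klem1} is a Gordon-type perturbation estimate comparing $P_{[x_1,p-1]}$ with its translate by $q_n$, and Lemma \ref{klem2} bounds the translated determinant a priori using the normalization $\phi(0)=1$ and the block expansion. These improved numerator bounds are then fed into a closed system of inequalities among the amplitudes $r_j$, $r_{j\pm\frac12}$ at the resonant sites (Theorems \ref{thm1} and \ref{thm2}), treated as unknowns and solved; the stated decay rate $L-2\beta(\alpha)$ comes out of iterating that system, with the non-resonant sites handled by Lemma \ref{Le.resonant}. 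None of these ingredients (the $q_n$-translation comparison of numerators, the a priori bound through $\phi(0)=1$, the separate treatment of the sites $jq_n$ and $jq_n+\frac12 q_n$, and the system of inequalities) appears in your outline, so the key estimates at the resonant sites are not established and the argument, as written, does not go through.
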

	Before  going  into the historical results attempting to solve Conjecture 1,  we   introduce  the phase and frequency resonances in the supercritical regime (this is also called the positive Lyapunov exponent regime).
	Define
	\begin{equation}\label{gres}
		\delta(\alpha,\theta)=\limsup_{k\to \infty}-\frac{\ln ||2\theta+k\alpha||_{\R/\Z}}{|k|}.
	\end{equation}
	
	When $\lambda \to \infty$, after a rescaling,   $H_{\lambda,\alpha,\theta}$  is  a perturbation of a diagonal matrix that has well localized eigenfunctions. 
	The diagonal entries (also  eigenvalues)  $\{2\lambda\cos(2\pi (\theta+k\alpha))\}_{k\in\Z}$ are  dense and the distance  between  any two entries  $2\lambda\cos(2\pi (\theta+k_1\alpha)) $ and  $2\lambda\cos(2\pi (\theta+k_2\alpha)) $ is 
	$$	|	2\lambda\cos(2\pi (\theta+k_1\alpha)) -2\lambda\cos(2\pi (\theta+k_2\alpha))| \;\;\;\;\;\;\; \;\;\;\;\;\;\; \;\;\;\;\;\;\; \;\;\;\;\;\;\; \;\;\;\;\;\;\;$$
	\begin{align*}
\;\;\;\;\;\;\;\;\;\;&=4|\lambda\sin (\pi (2\theta+ (k_1+k_2) \alpha))\sin(\pi(k_1-k_2)\alpha)|\\
	&\simeq |\lambda|\;\; || 2\theta+ (k_1+k_2) \alpha||_{\R/\Z} || (k_1-k_2)\alpha||_{\R/\Z}.
	\end{align*}
	By approximating  to the infinite  dimensional matrix $H_{\lambda,\alpha,\theta}$ from  its finite dimensional cut off, 
	the resonances $||2\theta+k\alpha||_{\R/\Z}$ (referred to as  phase resonances) and   $||k\alpha||_{\R/\Z}$ (referred to as frequency resonances)  appear. The strength of  frequency and phase resonances are essentially  quantified by \eqref{gresf} and \eqref{gres}.

	With recent developments, the   almost Mathieu operator undergoes a clear spectral transition (called metal-insulator transition or first transition line) when $|\lambda| $ changes from small to large.
	
	{\bf First transition line:}
	\begin{itemize}
		\item if $|\lambda|<1$,  $H_{\lambda,\alpha,\theta} $ has purely absolutely continuous spectrum for every $\alpha$ and $\theta $ ~\cite{j,ad08,aj10,avi08,gjls}.
		\item if $|\lambda|=1$,  $H_{\lambda,\alpha,\theta} $ has purely singular continuous spectrum for any irrational $\alpha$ and $\theta$ ~\cite{ajm,j19}.
		\item if $|\lambda|>1$, $H_{\lambda,\alpha,\theta} $ has  Anderson localization  for $\beta(\alpha)=\delta(\alpha,\theta)=0$ ~\cite{jks05,j}.
	\end{itemize}
	Understanding the first transition line  $|\lambda|=1$   is  now clear  by Avila's  global theory ~\cite{a15}.  When  $|\lambda|<1$ (subcritical regime) the Lyapunov exponent on the spectrum is $0$ even after the complexification of the  phase.  When $|\lambda|= 1$ (critical regime),  the Lyapunov exponent on the spectrum is $0$  and it changes to positive after the complexification. When $|\lambda|>1$ (supercritical regime),  the Lyapunov exponent is positive and equals $\ln|\lambda|$ on the spectrum.

	The spectral theory in the supercritical regime is more delicate.
	By Kotani theory, $H$ does not have any absolutely continuous spectrum  in the supercritical regime ~\cite{kot,ls99}. 
	 It is   believed that the spectral type of $ H$   (singular continuous spectrum or pure point spectrum/localization) depends on the competition between the Lyapunov exponent and resonances associated with  \eqref{gresf} and \eqref{gres}.  Such  observations go back to ~\cite{js94,j94,gordon,as82,j}.
	
	{\bf Second transition line in the frequency:}
	
	\begin{itemize}
		\item if  $ \ln |\lambda|>\beta(\alpha)$,  $H_{\lambda,\alpha,\theta} $ has   Anderson localization  for  $\delta(\alpha,\theta)=0$
		~\cite{jl18}.
		\item  if $0<\ln |\lambda|<\beta(\alpha)$,   $H_{\lambda,\alpha,\theta} $ has purely singular continuous spectrum ~\cite{ayz}.
	\end{itemize}
	
	{\bf Second  transition line in the phase:}
	\begin{itemize}
		\item if  $\ln |\lambda|>\delta(\alpha,\theta)$,  $H_{\lambda,\alpha,\theta} $ has   Anderson localization  for  $\beta(\alpha)=0$
		~\cite{jl2}.
		\item  if  $\ln |\lambda|<\delta(\alpha,\theta)$,   $H_{\lambda,\alpha,\theta} $ has purely singular continuous spectrum ~\cite{jl2}.
	\end{itemize}
	The second transition line  states that when the Lyapunov exponent beats frequency/phase resonances,  the operator exhibits localization. 
	In the regime of localization, Jitomirskaya and the author   also determined  exact exponential asymptotics of eigenfunctions and corresponding transfer matrices. Moreover, 
	the (reflective) hierarchical structure  of eigenfunctions was discovered  in ~\cite{jl18,jl2}.

	
	Let us turn back to   completely  resonant phases. 	For   completely resonant phases, 
	$\delta(\alpha,\theta)=\beta(\alpha)$.
Thus  both phase resonances and frequency resonances  appear and  have the same strength. 
 Conjecture 1 states that when the Lyapunov exponent beats  phase and frequency resonances, the operator has Anderson localization.  It is natural to think that the proof of Conjecture 1 could follow from a combination of the second transition line in the phase and the frequency.  However, the completely resonant phase brings many new challenges, in particular the completely resonant phenomenon. 
The original arguments of Jitomirskaya ~\cite{j}  do not work  for completely  resonant phases directly.
In ~\cite{jks05}, Jitomirskaya-Koslover-Schulteis found a trick to  fix the gap by
shrinking the size of  intervals around 0 (referred  to  as the ``shrinking scale'' technique). This  allows them to prove 
Conjecture 1 when $\beta(\alpha)=0$.  The proof in ~\cite{aj10,lyjfg} implies that $H_{\lambda,\alpha,\theta}$ has Anderson localization if $\ln |\lambda|>C\beta(\alpha)$, where $C$ is an absolute constant.
In ~\cite{aj09}, Avila and Jitomirskaya  proved that when $\delta(\alpha,\theta)=0$ and $\ln |\lambda|>\frac{16}{9}\beta(\alpha)$, $H_{\lambda,\alpha,\theta}$ has Anderson localization, which is a key step to solve the ten Martini problem.  This approach has been pushed to the limit $\ln |\lambda|>\frac{3}{2}\beta(\alpha)$ by the author and Yuan ~\cite{lyjst}. 
By a combination of arguments in  Jitomirskaya-Koslover-Schulteis ~\cite{jks05}, Avila-Jitomirskaya ~\cite{aj09} and Liu-Yuan  ~\cite{lyjst}, the author and Yuan 
established the Anderson localization for completely resonant phases when $\ln |\lambda|>7\beta(\alpha)$ ~\cite{lyjfa}.

With  localization proofs in ~\cite{jl18,jl2} and methods in ~\cite{jks05,lyjfa}, we could obtain the  Anderson localization for  $\ln |\lambda|>4\beta(\alpha)$ in Conjecture 1, where  $4$ is the non-trivial technical limit in such an approach since we have  to shrink the scale to avoid the complete resonance,  doubling the numerical number.
In 
~\cite{liuetds20}, the author noticed that for completely resonant phases, the phase resonance and frequency resonance are not   symmetric, namely, the frequency resonance only happens at sites $k=jq_n$, $j\in\Z\backslash\{0\}$ and the phase resonance happens at both $k=jq_n+\frac{1}{2}q_n $ and $k=jq_n$, $j\in\Z$ ( $ \frac{p_n}{q_n} $  is  the continued fraction  approximations    to $\alpha$). See 	Sections \ref{S4} and \ref{S5} for more details.  	So, instead of using    the Lagrange interpolation uniformly, the author treated  the Lagrange interpolation individually during the process of finding  Green's functions  without  ``small denominators'',  which allows him to
push the numerical number $4$ to $3$ ~\cite{liuetds20}. 

As we mentioned earlier, all  localization proofs  in previous literature ~\cite{aj09,aj10,lyjfa,lyjst,liuetds20,jks05} are devoted   to treating  small denominators (establishing  the lower bound of the denominators of Green's functions) and simply bound   numerators
by the rate of the Lyapunov exponent. 
 To the best of our knowledge, this  paper  is the first time to analyze numerators  of  Green's functions arising from quasi-periodic operators.  
 Moreover, we    expand     approaches in ~\cite{aj09,aj10,lyjfa,lyjst,liuetds20,jks05} to deal with denominators in many directions and add several significant ingredients to treat the numerators and denominators simultaneously.
 
 For completely resonant phases, the most challenging case  comes from resonant  sites $k=jq_n$, $j\in\Z\backslash\{0\}$ and   $k=jq_n+\frac{1}{2}q_n $, $j\in\Z$. 
 The Gordon type argument is an approach  to show the absence of eigenvalues based on repetitions of potentials governed by frequency resonances ~\cite{gordon,ayz,jl18,jlmary,as82}.
 We   develop  the Gordon type argument to establish sharp bounds of the numerators of  Green's functions around resonant sites $jq_n$. However, such an argument  does not work  for resonant sites $jq_n+\frac{1}{2} q_n$ (only coming from phase resonances) since the  repetition of potentials only appears at sites $jq_n$ by the   continued fraction expansion approximation. We regard  norms of generalized eigenfunctions at resonant sites  $k=jq_n, q_n+\frac{1}{2}q_n $ and numerators of  Green's functions around  resonant sites $jq_n+\frac{1}{2}q_n$ as variables. 
 Our first step is to establish inequalities among those variables.  By solving those inequalities, we obtain the relations among resonant sites. Then the exponential  decay at the resonant sites follows from  standard iterations.
 
Finally, we want to highlight another technical   novelty in our proof. The Gordon type argument works along with the transfer matrices,  which slightly differs from the denominators  and numerators of  Green's functions.  
 We introduce  the  notation $P_{[x_1,x_2]}$
to represent the denominator of  the Green's function restricting  to the interval $[x_1,x_2]$.  The advantage  of   the notation  $P_{[x_1,x_2]}$ is that it 
inherits  information from  the  transfer matrix from sites $x_1$ to $x_2$. This could simplify   the localization proof  in \cite{aj09,aj10,lyjfa,lyjst,liuetds20,jks05}.

	The rest of this paper is organized as follows. We list the definitions and standard facts  in Section \ref{S2}. 
	In Section \ref{S3}, we provide several technical lemmas.
	Sections \ref{S4} and \ref{S5} are devoted to treating  resonances. 
	In Section \ref{S6}, we complete the proof.

	\section{ Some notations and   known facts  }\label{S2}

	Let $H$ be an operator on $\ell^2(\Z)$.
	We say $\phi$ is a generalized eigenfunction corresponding to the generalized eigenvalue $E$ if 
	\begin{equation}\label{g68}
		H\phi=E\phi  ,\text{ and }  |\phi(k)|\leq \hat{C}( 1+|k|).
	\end{equation}
	By Shnol's theorem  ~\cite{berezanskii1968expansions},  in order to
	prove  Anderson localization of   $H$,  we only need  to show that  every  generalized eigenfunction  is in fact an exponentially decaying eigenfunction.
	To be more precise,  there exists some constant $c>0$ such that
	\begin{equation*}
		| \phi(k)|\leq  e^{-c|k|} \text{ for large } k.
	\end{equation*}	
	For simplicity, we assume    $\hat{C}=1$ in \eqref{g68}.
	

	From now on, we always assume $\phi$  is   a  generalized  eigenfunction  of $H_{\lambda,\alpha,\theta}$ and $E$ is the corresponding generalized eigenvalue.  Without loss of generality assume $\phi(0)=1$. It is well known that every generalized eigenvalue must be in the spectrum, namely $E\in \Sigma_{\lambda,\alpha}$, where
	$\Sigma_{\lambda,\alpha}$ is  the spectrum of   $H_{\lambda,\alpha,\theta}$ (the spectrum  does not depend on $\theta$).   
	Our  goal  is to show that there exists some  constant $c>0$  such that for large $k$,
	$$| \phi(k)|\leq  e^{-c|k|} .$$
	For any $x_1,x_2\in\Z$ with $x_1<x_2$, denote by 
	\begin{equation*}
		P_{[x_1,x_2]}(\lambda,\alpha,\theta,E)	=\det(R_{[x_1,x_2]}(H_{\lambda,\alpha,\theta}-E) R_{[x_1,x_2]}),
	\end{equation*}	
where $R_{[x_1,x_2]}$ is the restriction on $[x_1,x_2]$. 
	Let us denote
	$$ P_k(\lambda,\alpha,\theta,E)=\det(R_{[0,k-1]}(H_{\lambda,\alpha,\theta}-E) R_{[0,k-1]}).$$
	When there is no ambiguity, we drop the  dependence of  parameters $E,\lambda,\alpha$ or $ \theta$.
	Clearly,
	\begin{equation}\label{g108}
		P_{[x_1,x_2]}(\theta)	=P_{k} (\theta+x_1\alpha),
	\end{equation}
	where $k=x_2-x_1+1$.
	
	
	Let
	\begin{equation}\label{G.transfer}
		A_{k}(\theta)=\prod_{j=k-1}^{0 }A(\theta+j\alpha)=A(\theta+(k-1)\alpha)A(\theta+(k-2)\alpha)\cdots A(\theta)
	\end{equation}
	and
	\begin{equation}\label{G.transfer1}
		A_{-k}(\theta)=A_{k}^{-1}(\theta-k\alpha)
	\end{equation}
	for $k\geq 1$,
	where $A(\theta)=\left(
	\begin{array}{cc}
		E-2\lambda\cos2\pi\theta & -1 \\
		1& 0\\
	\end{array}
	\right)
	$.
	$A_{k}$  is called the (k-step) transfer matrix. 
	
	By the definition,   for  any $k\in\Z_+, m\in\Z$, one has
	\begin{equation}\label{G.new17}
		\left(\begin{array}{c}
			\phi(k+m) \\
			\phi(k+m-1)                                                                                        \end{array}\right)
		=A_{k}(\theta+m\alpha)
		\left(\begin{array}{c}
			\phi(m) \\
			\phi(m-1)                                                                                        \end{array}\right).
	\end{equation}
	
	It is easy to check  that for $k\in\Z_+$, 
	\begin{equation}\label{G34}
		A_{k}(\theta)=
		\left(
		\begin{array}{cc}
			P_k(\theta) &- P_{k-1}(\theta+\alpha)\\
			P_{k-1}(\theta) & - P_{k-2}(\theta+\alpha) \\
		\end{array}
		\right).
	\end{equation}
	The Lyapunov exponent 
	is given  by
	\begin{equation}\label{G21}
		L(E)=\lim_{k\rightarrow\infty} \frac{1}{k}\int_{\mathbb{R}/\mathbb{Z}} \ln \| A_k(\theta)\|d\theta.
	\end{equation}
	The Lyapunov exponent can be computed precisely for $E$ in the
	spectrum of $H_{\lambda,\alpha,\theta}$. 
	\begin{lemma}~\cite{bj02}\label{lya}
		For $E\in \Sigma_{\lambda,\alpha}$ and $|\lambda|>1$, we have
		$L(E)=\ln|\lambda|$.
	\end{lemma}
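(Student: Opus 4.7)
The plan is a two-sided argument: Herman's subharmonicity gives the lower bound $L(E)\geq \ln|\lambda|$ for every $E\in\R$, while Aubry duality combined with Kotani theory yields the matching upper bound for Lebesgue a.e.\ $E$ in the spectrum; continuity of $L(\cdot)$ in energy then upgrades the a.e.\ equality to pointwise equality on $\Sigma_{\lambda,\alpha}$.

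For the lower bound, I would complexify the phase $\theta\mapsto \theta+it$ and expand
\[
-2\lambda\cos 2\pi(\theta+it)=-\lambda e^{-2\pi t}e^{2\pi i\theta}-\lambda e^{2\pi t}e^{-2\pi i\theta}.
\]
As $t\to+\infty$ a single rank-one direction dominates $A(\theta+it)$ with norm of order $|\lambda|e^{2\pi t}$, and the dominant directions are consistent enough along the orbit $\{\theta+j\alpha\}$ that a direct product estimate (or Jensen's formula applied to the analytic entries of $A_k$) yields $L(E,it)-2\pi t\to \ln|\lambda|$ as $t\to+\infty$. The function $t\mapsto L(E,it)$ is convex and even as the limit of the subharmonic functions $\frac{1}{k}\int_{\R/\Z}\log\|A_k(\theta+it)\|\,d\theta$, so $L(E,it)-2\pi|t|$ is non-increasing on $[0,\infty)$ with limit $\ln|\lambda|$; evaluating at $t=0$ gives $L(E)\geq \ln|\lambda|$.

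For the upper bound, I would invoke Aubry duality, which, via Fourier conjugation in the phase variable, realizes the eigenvalue equation $H_{\lambda,\alpha,\theta}u=Eu$ as a direct integral of eigenvalue equations for the dual operators $H_{1/\lambda,\alpha,\cdot}$ at rescaled energy $E/\lambda$. Since $|\lambda|>1$ places the dual coupling in the subcritical regime $|1/\lambda|<1$, the dual spectrum is purely absolutely continuous (cf.\ the first transition line cited in the introduction), and Kotani theory forces $L_{1/\lambda}(E')=0$ for Lebesgue a.e.\ $E'\in\Sigma_{1/\lambda,\alpha}$. Translating through the duality identity $L_\lambda(E)=\ln|\lambda|+L_{1/\lambda}(E/\lambda)$ gives $L_\lambda(E)\leq \ln|\lambda|$ for Lebesgue a.e.\ $E\in\Sigma_{\lambda,\alpha}$.

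The main obstacle is removing the almost-everywhere qualifier: $L(E)$ is a priori only upper semicontinuous in $E$ (as a subadditive/Kingman limit of continuous averages), so density alone does not propagate the upper bound pointwise. The key additional input is continuity of $E\mapsto L(E)$ for irrational $\alpha$, which follows from the joint continuity of $(E,\alpha)\mapsto L(E,\alpha)$ at irrational frequencies, proved via quantitative subharmonic and large-deviation estimates on $\frac{1}{k}\log\|A_k(\theta)\|$. Once continuity in $E$ is in hand, the a.e.\ upper bound together with density in $\Sigma_{\lambda,\alpha}$ gives $L(E)\leq \ln|\lambda|$ for every $E$ in the spectrum, closing the argument.
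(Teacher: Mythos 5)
The paper does not actually prove this lemma: it is quoted from Bourgain--Jitomirskaya [bj02], so the relevant benchmark is the proof there. Your lower bound is fine -- the complexification/convexity argument is just Herman's subharmonicity estimate in Avila-style dress, and it is a standard ingredient. The trouble is in your upper bound. First, a smaller point: the identity $L_\lambda(E)=\ln|\lambda|+L_{1/\lambda}(E/\lambda)$ does not follow from the direct-integral form of Aubry duality by itself; one needs the Thouless formula together with the duality of the integrated density of states (or Avila's global theory), which should at least be flagged. Second, and more seriously: you feed in ``the dual spectrum is purely absolutely continuous'' for the given frequency. In this paper $\alpha$ is allowed to be Liouville ($\beta(\alpha)>0$), and purely a.c.\ spectrum of the subcritical almost Mathieu operator in that generality is Avila's theorem (together with Avila--Jitomirskaya/Avila--Damanik), a much later and deeper result whose proofs use the continuity of $L$ and the very identity $L=\max(0,\ln|\lambda|)$ on the spectrum that you are trying to prove. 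As the literature stands your argument is therefore circular (or at best needs a careful audit to rule circularity out), and it certainly cannot serve as a substitute for the citation. Third, Kotani theory gives $L_{1/\lambda}(E')=0$ only for Lebesgue-a.e.\ $E'$ in the essential support of the a.c.\ part; a full-measure subset of a Cantor spectrum need not be dense in it, so ``a.e.\ plus density'' is not automatic. To get density of $\{L_{1/\lambda}=0\}$ in $\Sigma_{1/\lambda}$ you should argue through the essential-closure characterization (every portion of the spectrum meets $\{L=0\}$ in positive measure), which again leans on the purely-a.c.\ statement.

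By contrast, the proof in [bj02] needs none of this machinery: it combines their continuity theorem for $(E,\alpha)\mapsto L(E,\alpha)$ at irrational $\alpha$ (which you invoke anyway for the last step), Hausdorff continuity of the spectrum in the frequency, and an explicit evaluation of the Lyapunov exponent on the spectrum at rational frequencies (Floquet theory/Chambers' formula plus Herman's bound), followed by rational approximation. If you want a correct self-contained argument along your lines, keep your lower bound and the continuity step, but replace the Kotani/a.c.-spectrum input by this rational-approximation computation; otherwise the honest course is to do what the paper does and cite [bj02].
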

	In the following,  denote by 
	$L := \ln|\lambda| $.
	
	By
	upper semicontinuity and unique ergodicity,
	one has
	\begin{equation}\label{G23}
		L=\lim_{k\rightarrow\infty} \sup_{\theta\in\mathbb{R}/ \mathbb{Z}}\frac{1}{k} \ln \| A_k(\theta)\|.
	\end{equation}
	Therefore, 
	for any  $  \varepsilon >0$,  
	\begin{equation}\label{G24}
		\| A_k(\theta)\|\leq  e^{(L+\varepsilon)k},
	\end{equation}
	when  $k$ is large enough  (independent of $\theta$).
	
	By \eqref{G34} and \eqref{G24},  one has for large $k$,
	\begin{equation}\label{Numerator}
		| P_k(\theta)|\leq  e^{(L+\varepsilon)k},
	\end{equation}
	and hence
	\begin{equation}\label{Numerator1}
		| P_{[x_1,x_2]}(\theta)|\leq  e^{(L+\varepsilon)|x_2-x_1|}.
	\end{equation}
	By \eqref{G.new17} and \eqref{G24}, one has that  for large $|k_1-k_2|$,
	\begin{equation}\label{g500}
		\left \|	\left(\begin{array}{c}
			\phi(k_1+1) \\
			\phi(k_1)                                                                                        \end{array}\right)\right\|
		\leq  e^{(L+\varepsilon) |k_1-k_2|}\left\|
		\left(\begin{array}{c}
			\phi(k_2+1) \\
			\phi(k_2)                                                                                        \end{array}\right)\right\|.
	\end{equation}
	For any $x_1,x_2\in\Z$ with $x_1<x_2$,  let $G_{[x_1,x_2]}$  be the Green's function:
	\begin{equation*}
	G_{[x_1,x_2]}=(R_{[x_1,x_2]} (H_{\lambda,\alpha,\theta}-E) R_{[x_1,x_2]})^{-1}.
	\end{equation*}
	By Cramer's rule (see p.15, ~\cite{bgreen05} for example)  for  any
	$ y\in [x_1,x_2]  $,  one has
	\begin{eqnarray}
		|G_{[x_1,x_2]}(x_1,y)| &=&  \left| \frac{P_{[y+1,x_2]}}{P_{[x_1,x_2]}}\right|,\label{Cramer1}\\
		|G_{[x_1,x_2]}(y,x_2)| &=&\left|\frac{P_{[x_1,y-1]}}{P_{[x_1,x_2]}} \right|.\label{Cramer2}
	\end{eqnarray}
	It is  easy to check that (p. 61, ~\cite{bgreen05}) for  any 	$ y\in [x_1,x_2]  $, 
	\begin{equation}\label{Block}
		\phi(y)= -G_{[x_1,x_2]}(x_1,y ) \phi(x_1-1)-G_{[x_1,x_2]}(y,x_2) \phi(x_2+1).
	\end{equation}

	Denote by   $x_1^\prime=x_1-1$ and $x_2^\prime=x_2+1$.
	
	By \eqref{Cramer1}, \eqref{Cramer2} and \eqref{Block}, 
	one has that  for any $y\in [x_1,x_2]$, 
	\begin{equation}\label{g100}
		|\phi(y)|\leq |P_{[x_1,x_2]}|^{-1} |P_{[y+1,x_2]}|  |\phi(x_1^{\prime})| + |P_{[x_1,x_2]}|^{-1}  |P_{[x_1,y-1]}| |\phi(x_2^{\prime})|.
	\end{equation}

	Given  a set $\{\theta_1, \cdots ,\theta_{k+1}\}$,  the lagrange Interpolation terms $\text{Lag}_m$, $m=1,2,\cdots,k+1$, are  defined by
	\begin{equation}\label{Def.Uniform}
		\text{Lag}_m= \ln \max_{ x\in[-1,1]} \prod_{ j=1 , j\neq m }^{k+1}\frac{|x-\cos2\pi\theta_j|}
		{|\cos2\pi\theta_m-\cos2\pi\theta_j|}.
	\end{equation}
	
	The following lemma is another form of  Lemma 9.3 in ~\cite{aj09}, which has been reformulated  in  ~\cite[Lemma 2.3]{liuetds20}.
	\begin{lemma}~\cite[Lemma 2.3]{liuetds20}\label{Le.Uniform}
		Given a set $\{\theta_1, \cdots ,\theta_{k+1}\}$,   there exists some $\theta_m$ in  $\{\theta_1, \cdots ,\theta_{k+1}\}$ such that
		
		\begin{equation*}
			|	P_{k}(\theta_m -\frac{k-1}{2}\alpha)|\geq \frac{e^{k L-\rm{Lag}_m}}{k+1}.
		\end{equation*}
		
	\end{lemma}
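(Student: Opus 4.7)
The plan is to show that $Q_k(\theta) := P_k(\theta - \frac{k-1}{2}\alpha)$ is an even trigonometric polynomial in $\theta$ of degree $k$, identify it with a polynomial of degree $k$ in $x = \cos 2\pi\theta$ whose leading coefficient is $(2\lambda)^k$, and then combine Lagrange interpolation at the $k+1$ nodes $x_m = \cos 2\pi\theta_m$ with a Chebyshev extremal estimate.

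First I would verify the symmetry $Q_k(-\theta) = Q_k(\theta)$. The matrix $R_{[0,k-1]}(H_{\lambda,\alpha,\theta-(k-1)\alpha/2} - E)R_{[0,k-1]}$ is tridiagonal with unit off-diagonals and diagonal entries $D_j(\theta) = 2\lambda\cos 2\pi(\theta + (j - \tfrac{k-1}{2})\alpha) - E$ for $j = 0,\dots, k-1$. Swapping $\theta \mapsto -\theta$ and simultaneously reindexing $j \mapsto k-1-j$ is a reversal of rows and columns, a conjugation by a permutation matrix that preserves the determinant; the reindexing identity $D_{k-1-j}(\theta) = D_j(-\theta)$ then yields $Q_k(-\theta) = Q_k(\theta)$. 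Being an even trigonometric polynomial in $\theta$, $Q_k(\theta) = p(\cos 2\pi\theta)$ for some polynomial $p$ of degree at most $k$.

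Next I identify the leading coefficient of $p$ in $x = \cos 2\pi\theta$. Expanding the tridiagonal determinant, the only $\lambda^k$ contribution is $\prod_{j' = -(k-1)/2}^{(k-1)/2} 2\lambda\cos 2\pi(\theta + j'\alpha)$; all other summands carry strictly fewer cosine factors and cannot produce a term of degree $k$ in $x$. Writing $\cos 2\pi(\theta + j'\alpha) = \frac{1}{2}(e^{2\pi i(\theta + j'\alpha)} + e^{-2\pi i(\theta + j'\alpha)})$ and expanding, the coefficients of $e^{\pm 2\pi i k\theta}$ equal $\frac{1}{2^k}$ because the symmetric sum $\sum_{j'}j'$ vanishes. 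Hence $\prod_{j'}\cos 2\pi(\theta + j'\alpha) = \frac{1}{2^{k-1}}\cos 2\pi k\theta + (\text{lower-frequency terms})$, and combining this with the Chebyshev identity $\cos 2\pi k\theta = T_k(\cos 2\pi\theta) = 2^{k-1}x^k + \cdots$ pins the leading coefficient of $p$ at exactly $(2\lambda)^k$.

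With the polynomial structure in hand, apply Lagrange interpolation at the nodes $x_m = \cos 2\pi\theta_m$ and compare the $x^k$ coefficients on both sides of $p(x) = \sum_m p(x_m)\prod_{j\ne m}(x - x_j)/(x_m - x_j)$ to obtain
\[
(2\lambda)^k = \sum_{m=1}^{k+1}\frac{p(x_m)}{\prod_{j\ne m}(x_m - x_j)}.
\]
Pigeonhole yields some $m$ with $|p(x_m)| \ge \frac{|2\lambda|^k}{k+1}\prod_{j\ne m}|x_m - x_j|$. The defining identity $e^{\mathrm{Lag}_m}\prod_{j\ne m}|x_m - x_j| = \max_{x \in [-1,1]}\prod_{j\ne m}|x - x_j|$, together with the Chebyshev extremal lower bound $\max_{x \in [-1,1]}|\prod_{j\ne m}(x - x_j)| \ge 2^{1-k}$ for a monic polynomial of degree $k$, then gives $|Q_k(\theta_m)| = |p(x_m)| \ge \frac{2|\lambda|^k e^{-\mathrm{Lag}_m}}{k+1} \ge \frac{e^{kL - \mathrm{Lag}_m}}{k+1}$, which is the desired estimate.

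The main obstacle is pinning down the leading coefficient correctly: a careless expansion of either the tridiagonal determinant or of the product of shifted cosines can introduce a stray factor of $2^{\pm k}$ that destroys the match between $(2\lambda)^k$ and the target $e^{kL} = |\lambda|^k$. Once the leading coefficient is secured, the remaining Lagrange/pigeonhole/Chebyshev chain is routine and in fact provides a factor of $2$ to spare in the final inequality.
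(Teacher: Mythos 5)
Your argument is correct and is essentially the proof behind the cited \cite[Lemma 2.3]{liuetds20} (going back to \cite{aj09,j}): evenness of $P_k(\theta-\frac{k-1}{2}\alpha)$ in $\theta$, its identification with a degree-$k$ polynomial in $\cos 2\pi\theta$ with leading coefficient $(2\lambda)^k$, Lagrange interpolation at the $k+1$ nodes, and the Chebyshev lower bound $2^{1-k}$ for monic polynomials, combined with $e^{kL}=|\lambda|^k$ from Lemma \ref{lya}. The only unstated point is the degenerate case $\cos 2\pi\theta_m=\cos 2\pi\theta_j$ for some $j\neq m$, where ${\rm Lag}_m=+\infty$ makes the claimed bound vacuous, so one may assume the interpolation nodes are distinct.
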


	In the following,  we always assume 
	\begin{itemize}
		\item $\varepsilon>0$ is an arbitrarily small constant and it may change even in the same equation.
		\item  $C$ is a large constant (depends on $\lambda$ and $\alpha$) and  it may change even in the same equation.
		\item $n$ is large enough  which depends on all  parameters and constants.
	\end{itemize}
Recall that	 $ \frac{p_n}{q_n} $  is  the continued fraction  approximations    to $\alpha$.
	By the definition of $\beta(\alpha)$, one has 
that 
\begin{equation*}
\beta=	\beta(\alpha)=\limsup_{n\rightarrow\infty}\frac{\ln q_{n+1}}{q_n}.
\end{equation*}
	
%
	Let $b_n=   10^{-5}q_{n}$.	
	For any $\ell \in\Z$, let
	\begin{equation*}
		r_{\ell}^{\varepsilon,n}= \sup_{|r|\leq 10\varepsilon}|\phi(\ell q_n+rq_n)|,
	\end{equation*}
	and
	\begin{equation*}
		r_{\ell+\frac{1}{2}}^{\varepsilon,n}= \sup_{|r|\leq 10\varepsilon}|\phi(\ell q_n+\lfloor\frac{q_n}{2}\rfloor+rq_n)|,
	\end{equation*}
	where  $\lfloor x \rfloor$ is the largest integer that is less than or equal to $x$.
	\begin{lemma}\label{Le.resonant}
		~\cite[Lemma 2.5]{liuetds20}
		Assume that $|\lambda|>1$ and $2\theta\in\alpha\Z+\Z$.
		Suppose $k\in [\ell q_n,(\ell+\frac{1}{2})q_n]$  or $k\in [(\ell+\frac{1}{2})q_n,(\ell+1)q_n]$  with $0\leq |\ell| \leq  100\frac{b_{n+1}}{q_n}+100$,  and  $  {\rm dist}(k, q_n\mathbb{Z}+\frac{q_n}{2}\mathbb{Z} )\geq     10 \varepsilon q_n$. Let  $d_t=|k-tq_n|$ for $t\in\{\ell,\ell+\frac{1}{2},\ell+1\}$.
		Then for sufficiently large
		$n $, we have  that 
		\begin{itemize}
			\item 	when $k\in [\ell q_n,(\ell+\frac{1}{2})q_n]$, 
			\begin{equation}\label{Intervalk}
				|\phi(k)|\leq r_{\ell}^{\varepsilon,n}\exp\{-(L- \varepsilon)(d_{\ell}-3\varepsilon q_n)\} + r_{\ell+\frac{1}{2}}^{\varepsilon,n}\exp\{-(L- \varepsilon)(d_{\ell+\frac{1}{2}}-3\varepsilon q_n)\};
			\end{equation}
			\item 
			when $k\in [(\ell+\frac{1}{2})q_n,(\ell+1)q_n]$,
			\begin{equation}\label{Intervalk1}
				|\phi(k)|\leq r_{\ell+\frac{1}{2} }^{\varepsilon,n}\exp\{-(L- \varepsilon)(d_{\ell+\frac{1}{2}}-3\varepsilon q_n)\}+r_{\ell+1}^{\varepsilon,n}\exp\{-(L- \varepsilon)(d_{\ell+1}-3\varepsilon q_n)\}.
			\end{equation}
		\end{itemize}

	\end{lemma}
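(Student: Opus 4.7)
I plan to apply the block resolvent identity \eqref{Block} on an interval $[x_1,x_2]$ of length approximately $q_n/2$, with $x_1$ lying in a $10\varepsilon q_n$-window around $\ell q_n$ and $x_2$ in a $10\varepsilon q_n$-window around $(\ell+\tfrac12)q_n$. The hypothesis $\mathrm{dist}(k,q_n\Z+\tfrac{q_n}{2}\Z)\ge 10\varepsilon q_n$ guarantees that $k$ is well inside this interval, while the windows are small enough that the boundary evaluations satisfy $|\phi(x_1-1)|\le r_\ell^{\varepsilon,n}$ and $|\phi(x_2+1)|\le r_{\ell+\frac12}^{\varepsilon,n}$ by definition. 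I treat only $k\in[\ell q_n,(\ell+\tfrac12)q_n]$; the other case is symmetric.

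\emph{Step 1: Numerators.} With the block chosen, Cramer's rule \eqref{Cramer1}--\eqref{Cramer2} combined with the Lyapunov upper bound \eqref{Numerator1} controls
\[
|P_{[k+1,x_2]}|\le e^{(L+\varepsilon)(x_2-k)},\qquad |P_{[x_1,k-1]}|\le e^{(L+\varepsilon)(k-x_1)}.
\]

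\emph{Step 2: Denominator.} Using \eqref{g108}, I rewrite $P_{[x_1,x_2]}(\theta)=P_K(\theta_{x_1}-\tfrac{K-1}{2}\alpha)$ with $K=x_2-x_1+1$ and centered phase $\theta_{x_1}=\theta+\tfrac{x_1+x_2}{2}\alpha$. Varying $x_1$ over the $10\varepsilon q_n$-window produces a family of $\lfloor 10\varepsilon q_n\rfloor+1$ distinct centered phases differing by multiples of $\alpha/2$, and Lemma \ref{Le.Uniform} applied to this family yields some admissible $x_1$ with
\[
|P_{[x_1,x_2]}|\;\ge\;\tfrac{1}{K+1}\,e^{KL-\mathrm{Lag}_m}.
\]
The task is to show that \emph{some} anchor in the window has $\mathrm{Lag}_m\le \varepsilon q_n$. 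Using the identity $|\cos 2\pi\theta_a-\cos 2\pi\theta_b|\asymp \|\theta_a+\theta_b\|_{\R/\Z}\,\|\theta_a-\theta_b\|_{\R/\Z}$ displayed in the introduction, the completely resonant hypothesis $2\theta\in\alpha\Z+\Z$ collapses the ``$\theta_a+\theta_b$'' obstruction to a frequency-type condition $\|j\alpha\|_{\R/\Z}$, so both factors reduce to multiplicative estimates on $\|j\alpha\|_{\R/\Z}$ for nonzero $j$ appearing in the product over $\{\theta_j\}$. Standard continued fraction estimates ($\|j\alpha\|_{\R/\Z}\ge \|q_n\alpha\|_{\R/\Z}$ for $0<|j|<q_{n+1}$) bound the product, with the assumption $|\ell|\le 100 b_{n+1}/q_n+100$ keeping the range of indices below the next denominator scale. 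After an averaging/pigeonhole over the $10\varepsilon q_n$-family, one anchor achieves $|P_{[x_1,x_2]}|\ge e^{(L-\varepsilon)K}$.

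\emph{Step 3: Assembly.} Substituting Steps 1--2 into \eqref{g100} gives
\begin{align*}
|\phi(k)| \;\le\;& e^{-(L-\varepsilon)K}\,e^{(L+\varepsilon)(x_2-k)}\,r_\ell^{\varepsilon,n}\\
& + e^{-(L-\varepsilon)K}\,e^{(L+\varepsilon)(k-x_1)}\,r_{\ell+\frac12}^{\varepsilon,n}.
\end{align*}
Since $K-(x_2-k)=k-x_1\ge d_\ell-10\varepsilon q_n$ and symmetrically $K-(k-x_1)=x_2-k\ge d_{\ell+\frac12}-10\varepsilon q_n$, the $2\varepsilon K\le \varepsilon q_n$ loss from combining the $\pm\varepsilon$ error terms is absorbed into the $3\varepsilon q_n$ slack in \eqref{Intervalk}, recovering the stated bound.

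\emph{Main obstacle.} The genuinely delicate step is Step 2: producing a single anchor $(x_1,x_2)$ for which the Lagrange product is small. Under the completely resonant condition, phase and frequency resonances fuse, so one cannot simply average over phases to kill phase resonances separately; instead the flexibility in $(x_1,x_2)$ within the $10\varepsilon q_n$-window is the only tool, and one must verify that the family of centered phases actually avoids the bad sets in $\R/\Z$ of Diophantine type coming from the continued fractions $p_n/q_n$. The distance condition on $k$ is precisely what is needed to keep the window from colliding with an additional resonant site outside $\{\ell q_n,(\ell+\tfrac12)q_n\}$.
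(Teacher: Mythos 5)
The decisive gap is your Step 2. Lemma \ref{Le.Uniform} interpolates a polynomial of degree $K$ in $\cos2\pi\theta$ and therefore needs a family of $K+1$ sample phases; your family of window anchors has only about $10\varepsilon q_n\ll K\approx \frac{q_n}{2}$ members, so the interpolation argument cannot even be set up (moreover, moving $x_1$ while keeping $x_2$ changes $K$, i.e.\ the degree, so the quantities being compared are not values of one fixed $P_K$). Even with a legitimate family of $K+1$ nodes, the lemma only asserts that \emph{some} node realizes a large value; no pigeonhole forces the selected node to correspond to a box whose two endpoints both lie in the prescribed $10\varepsilon q_n$-windows around $\ell q_n$ and $(\ell+\frac12)q_n$. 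This is exactly why the arguments of Sections \ref{S4}--\ref{S5} must carry out a case analysis over where the selected center $j_0$ lands instead of placing the interval at will.

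Worse, the bound you want, $|P_{[x_1,x_2]}|\ge e^{(L-\varepsilon)K}$ for a box joining two consecutive resonant sites, is not merely unproved but is precisely what complete resonance obstructs: with $2\theta\in\alpha\Z+\Z$, among the phases attached to such a box there are pairs $a\neq b$ with $2\theta+(a+b)\alpha$ within $O(1/q_{n+1})$ of $\Z$, so $|\cos2\pi\theta_a-\cos2\pi\theta_b|\lesssim e^{-\beta q_n}$ and the relevant Lagrange terms are of size $\beta_j q_n$ rather than $\varepsilon q_n$ (this is the content of Claim 1, \eqref{g66}). Accordingly, the paper only obtains denominator lower bounds with losses $e^{-(\beta_j+\varepsilon)q_n}$ or $e^{-2(\beta_j+\varepsilon)q_n}$ (see \eqref{g110} and \eqref{g121}), which must then be compensated by the numerator estimates of Lemmas \ref{klem1} and \ref{klem2}; if your Step 2 were available, most of the difficulty of the paper would evaporate, which is a strong sign it is not. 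The actual proof of the quoted lemma (\cite[Lemma 2.5]{liuetds20}, modeled on \cite[Lemma 4.1]{jl18}) runs differently: one shows every point at distance at least of order $\varepsilon q_n$ from $q_n\Z+\frac{q_n}{2}\Z$ is regular for intervals of length $6sq_{n-n_0}$ built from the smaller scale $q_{n-n_0}$ (so the node family has the correct cardinality and uniformly small Lagrange terms), and then iterates the block expansion \eqref{Block} along a chain of such regular intervals until the boundary points fall into the $10\varepsilon q_n$-neighborhoods of the resonant sites; it is this iteration, not a single box with pinned endpoints, that produces the rates $(L-\varepsilon)(d_t-3\varepsilon q_n)$.
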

	A similar version of Lemma \ref{Le.resonant} firstly appears in ~\cite[Lemma 4.1]{jl18}.  
	\section{Technical preparations}\label{S3}

	Without loss of generality, we assume $\alpha\in(0,1)$ and $\lambda>e^{2\beta(\alpha)}$.  Since $2\theta\in\alpha \Z+\Z$,   essentially we only need to study   $\theta\in\{ -\frac{\alpha}{2},-\frac{\alpha}{2}-\frac{1}{2},0, -\frac{1}{2}\}$ by   shifting the operator from $H_{\lambda,\alpha,\theta}$ 
	to $H_{\lambda,  \alpha,\theta\pm\alpha}$. For this reason, we always assume $\theta\in\{ -\frac{\alpha}{2},-\frac{\alpha}{2}-\frac{1}{2},0, -\frac{1}{2}\}$ in the following arguments.
	
	
For simplicity, we drop superscripts $n$ and $\varepsilon$ from $ 	r_j^{\varepsilon,n}$ and $	r_{j+\frac{1}{2}}^{\varepsilon,n}$,
since  $n$ and $\varepsilon$ will be   fixed.
	\begin{lemma}\label{klem2}
		Let $0\leq x\leq \frac{q_n}{4}$.  Assume that $p_1$ satisfies $|p_1-\frac{1}{2} q_n|\leq 20 \varepsilon q_n$ and $p_2$ satisfies 
		$|p_2- q_n|\leq 20 \varepsilon q_n$. 
		Then we have  
		\begin{equation}\label{g10}
			|	P_{ [-x,p_1]}  |\leq e^{L(\frac{1}{2}q_n-x)+C\varepsilon q_n},
		\end{equation}
		\begin{equation}\label{g101}
			|	P_{ [-p_1,x]}  |\leq e^{L(\frac{1}{2}q_n-x)+C\varepsilon q_n},
		\end{equation}
		and
		\begin{equation}\label{g102}
			|	P_{ [-x,p_2]} |\leq e^{L( q_n-x)+C\varepsilon q_n}.
		\end{equation}
	\end{lemma}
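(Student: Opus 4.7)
By the translation identity \eqref{g108}, the three inequalities \eqref{g10}, \eqref{g101}, \eqref{g102} reduce to estimates on $|P_k(\psi)|$ at specific triples $(k,\psi)$: respectively $(p_1+x+1,\,\theta-x\alpha)$, $(p_1+x+1,\,\theta-p_1\alpha)$, and $(p_2+x+1,\,\theta-x\alpha)$. The common feature is that in each case $k$ is between $\tfrac{1}{2}q_n$ and $q_n+20\varepsilon q_n$, while the ``centered'' phase $\psi+\tfrac{k-1}{2}\alpha$ is, thanks to $2\theta\in\alpha\Z+\Z$ and the integrality of $p_1,p_2,x$, a specific element of $\tfrac{1}{2}\alpha\Z\bmod\Z$. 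In particular it sits exponentially close (in the Diophantine sense) to a shift of an integer multiple of $\alpha$.

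My plan is to exploit the polynomial-in-cosine structure underlying Lemma~\ref{Le.Uniform}, namely that $Q_k(\phi):=P_k(\phi-\tfrac{k-1}{2}\alpha)$ is a polynomial of degree $k$ in $\cos 2\pi\phi$ for the almost Mathieu operator (Avila--Jitomirskaya). Picking $k+1$ nodes $\phi_1,\dots,\phi_{k+1}$ with distinct cosines, the exact Lagrange identity
\[
Q_k(\phi)=\sum_{m=1}^{k+1} Q_k(\phi_m)\prod_{j\ne m}\frac{\cos 2\pi\phi-\cos 2\pi\phi_j}{\cos 2\pi\phi_m-\cos 2\pi\phi_j}
\]
together with the uniform bound $|Q_k(\phi_m)|\le e^{(L+\varepsilon)k}$ from \eqref{Numerator} converts the task into controlling the maximum Lagrange coefficient. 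I would take nodes of the form $\phi_j=\phi_0+j\alpha$ for a base phase $\phi_0$ arranged so that the target $\phi$ is approximated by one specific node $\phi_{m_0}$ with cosine-discrepancy at most $\|q_n\alpha\|\le e^{-(\beta-\varepsilon)q_n}$. The Diophantine hypothesis $\beta(\alpha)<\infty$, combined with the restriction $x\le q_n/4$, keeps every relevant denominator $|\cos 2\pi\phi_m-\cos 2\pi\phi_j|$ bounded below by $e^{-C\varepsilon q_n}$, since all relevant index differences are at most $q_n$, safely below $q_{n+1}$. The saving of exactly $e^{-2Lx}$ would then arise from a chain of $2x$ cosine differences in the Lagrange numerator of each $m\ne m_0$ term, each contributing a factor of size $e^{-L+C\varepsilon}$ on account of the near-coincidence of $\phi$ with $\phi_{m_0}$.

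The main obstacle is arranging the node selection so that the Lagrange factor loses precisely $e^{-2Lx}$---no more, no less---and so that no denominator ever collapses below $e^{-C\varepsilon q_n}$. This is a delicate accounting of cosine spacings among the $k+1$ nodes, tractable only because the hypothesis $x\le q_n/4$ prevents second-level resonances at the scale $q_{n+1}$ from entering. The three inequalities are proved in parallel, with the only difference being the explicit resonant value approximated by $\phi_{m_0}$ (depending on whether the boundary is near $q_n/2$, near $q_n$, or on the left-hand end); alternatively \eqref{g101} can be obtained from \eqref{g10} via the reflection $n\mapsto -n$ of the completely resonant potential, which sends the interval $[-x,p_1]$ onto $[-p_1,x]$ after a compensating shift of $\theta$.
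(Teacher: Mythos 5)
There is a genuine gap, and it is structural rather than a matter of ``delicate accounting.'' Your plan never uses the generalized eigenfunction $\phi$ with $\phi(0)=1$, but the inequalities \eqref{g10}--\eqref{g102} are statements about the box determinant at the particular generalized eigenvalue $E$; for a generic $E\in\Sigma_{\lambda,\alpha}$ one expects $|P_{[-x,p_1]}|\sim e^{L(\frac{q_n}{2}+x)}$, so a bound of size $e^{L(\frac{q_n}{2}-x)}$ (a gain of $e^{-2Lx}$ over \eqref{Numerator1}) cannot follow from the phase/frequency arithmetic alone; some input tying $E$ to an eigenfunction concentrated near $0$ is indispensable. Moreover, the Lagrange mechanism you describe cannot deliver the gain even in principle: with only the uniform node bound $|Q_k(\phi_m)|\le e^{(L+\varepsilon)k}$, the interpolation identity gives $|Q_k(\phi)|\le\sum_m|Q_k(\phi_m)|\,|L_m(\phi)|$, and when $\cos2\pi\phi$ is close to $\cos2\pi\phi_{m_0}$ the coefficient $L_{m_0}(\phi)$ is of order $1$, so the dominant term is $\approx e^{(L+\varepsilon)k}=e^{(L+\varepsilon)(\frac{q_n}{2}+x)}$ --- the trivial bound. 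Near-coincidence with one node only suppresses the $m\neq m_0$ terms; it cannot certify that $Q_k$ itself is small at $\phi$ unless you already have a sub-trivial bound at $\phi_{m_0}$, which you do not. The claim that ``$2x$ cosine differences each contribute $e^{-L+C\varepsilon}$'' has no basis: those differences are $O(1)$, and they sit in the non-dominant terms anyway. Note also that in the paper Lemma \ref{Le.Uniform} is used in the opposite direction (lower bounds at some good node), which is the only thing uniform node data can give.

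The paper's actual proof is short and uses exactly the ingredient you dropped. Writing $I=[-x,p_1]$, which contains the normalization site $0$ at distance $x$ from the left edge and $\approx\frac{q_n}{2}$ from the right edge, it evaluates the block expansion \eqref{g100} at $y=0$: the length-$x$ numerator $P_{[-x,-1]}$ pairs with the boundary value near $p_1$ (polynomially bounded by $r_{\frac12}$), contributing $e^{Lx}\le e^{L(\frac{q_n}{2}-x)}$ since $x\le\frac{q_n}{4}$, while the length-$\frac{q_n}{2}$ numerator $P_{[1,p_1]}$ pairs with $|\phi(-x-1)|$, which by Lemma \ref{Le.resonant} decays like $r_0e^{-Lx}+r_{-\frac12}e^{-L(\frac{q_n}{2}-x)}$ (this is \eqref{g11}); since $\phi(0)=1$ and all $r$'s are polynomially bounded, inverting gives \eqref{g10}, and \eqref{g101}, \eqref{g102} follow by repositioning the interval. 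So the $e^{-2Lx}$ saving comes from the interplay of the eigenfunction decay near the resonant site $0$ with the normalization $\phi(0)=1$, not from interpolation; to repair your argument you would have to import this eigenfunction information, at which point you are back to the paper's route.
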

	\begin{proof}
		We are going to prove \eqref{g10} first.
		Let $I=[x_1,x_2]$, where $x_1=-x$ and $x_2=p_1$.  
		
		By Lemma \ref{Le.resonant}, one has
		\begin{equation}\label{g11}
			|\phi(x_1^\prime)\leq r_{0} e^{-L  x+C\varepsilon q_n} +r_{-\frac{1}{2}} e^{-L ( \frac{q_n}{2} -x)+C\varepsilon q_n} 
		\end{equation}
		By   \eqref{g100}, one has
		\begin{equation}\label{g13}
			|\phi(0)| \leq 	|	P_{ [-x,p_1]} |^{-1}  (|P_{[-x,-1]}| |\phi (x_2^\prime)|+
			|P_{[1,p_1]} | |\phi (x_1^\prime)|).
		\end{equation}
		
		By \eqref{Numerator1}, \eqref{g11}  and \eqref{g13}, one has
		\begin{equation}\label{g14}
			|\phi(0)| \leq 	|	P_{ [-x,p_1]} |^{-1}  e^{C\varepsilon q_n}(e^{L(\frac{q_n}{2}-x)}r_0+e^{L x} r_{-\frac{1}{2}}+e^{L x} r_{\frac{1}{2}}).
		\end{equation}
		Since $\phi(0)=1$ and $|\phi(k)|\leq 1+|k|$, by \eqref{g14}, we have that
		\begin{equation*} 
			|	P_{ [-x,p_1]} |\leq e^{L(\frac{1}{2}q_n-x)+C\varepsilon q_n}.
		\end{equation*}
		
		We are going to prove \eqref{g101} (\eqref{g102}). In this case, we only need to 
		set  $I=[x_1,x_2]$, where $x_1=-p_1$ and $x_2=x$ ($x_1=-x$ and $x_2=p_2$), and repeat the proof of \eqref{g10}. 
		
	\end{proof}
	
	\begin{lemma}~\cite{simon1985almost}\label{Lemgordonidea2}
		Let $A^1,A^2,\cdots,A^n$ and  $B^1,B^2,\cdots,B^n$ be $2\times2$ matrices with $||\prod_{m=0}^{j-1}A^{k+m}||\leq D e^{dj}$ for some constant $D$ and $d$.
		Then
		\begin{equation*}
			||(A^n+B^n)\cdots(A^1+B^1)-A^n\cdots A^1||\leq De^{dn} (\prod_{j=1}^n(1+De^{-d}||B^j||)-1).
		\end{equation*}
		
	\end{lemma}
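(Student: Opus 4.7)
The plan is to establish the lemma by a direct combinatorial expansion of the product on the left-hand side. I begin by writing
\begin{equation*}
(A^n+B^n)\cdots(A^1+B^1) = \sum_{S\subseteq\{1,\ldots,n\}} M_S,
\end{equation*}
where $M_S$ denotes the product in which $B^j$ is inserted at position $j$ for every $j\in S$ and $A^j$ is used at every $j\notin S$. The subset $S=\emptyset$ reproduces exactly $A^n\cdots A^1$, so the matrix whose norm must be estimated is $\sum_{S\neq\emptyset} M_S$, and by the triangle inequality it suffices to bound $\sum_{S\neq\emptyset}\|M_S\|$ term by term.

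The next step is to decompose each $M_S$ into runs. For $S=\{j_1<j_2<\cdots<j_s\}$, the matrix $M_S$ is an alternating word of the form
\begin{equation*}
(A^n\cdots A^{j_s+1})\,B^{j_s}\,(A^{j_s-1}\cdots A^{j_{s-1}+1})\,B^{j_{s-1}}\cdots B^{j_1}\,(A^{j_1-1}\cdots A^1),
\end{equation*}
consisting of the $s$ factors $B^{j_k}$ separated by $s+1$ runs of consecutive $A$-matrices (some possibly empty) whose total length equals $n-s$. Each such run is a partial product of the type controlled by the hypothesis, so a run of length $\ell\geq 1$ has norm at most $De^{d\ell}$, while an empty run contributes the trivial factor $1\leq D$. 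Multiplying these estimates yields
\begin{equation*}
\|M_S\| \leq D^{s+1}e^{d(n-s)}\prod_{j\in S}\|B^j\| = De^{dn}\prod_{j\in S}\bigl(De^{-d}\|B^j\|\bigr).
\end{equation*}

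Finally, I will sum over all non-empty $S\subseteq\{1,\ldots,n\}$ and invoke the elementary identity
\begin{equation*}
\sum_{\emptyset\neq S\subseteq\{1,\ldots,n\}}\prod_{j\in S}x_j \;=\; \prod_{j=1}^n(1+x_j) - 1
\end{equation*}
with $x_j=De^{-d}\|B^j\|$, which recovers exactly the right-hand side of the inequality claimed in the lemma. There is no serious analytic obstacle; the only technical delicacy is the bookkeeping of the alternating decomposition of $M_S$, in particular the handling of empty $A$-runs between consecutive $B$-insertions, which is absorbed cleanly by the uniform upper bound $D^{s+1}e^{d(n-s)}$. This is why the statement is quoted from \cite{simon1985almost} as an algebraic fact.
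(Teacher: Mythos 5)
Your expansion over subsets $S\subseteq\{1,\ldots,n\}$, with the run-by-run estimate $\|M_S\|\le D^{s+1}e^{d(n-s)}\prod_{j\in S}\|B^j\|$ and the identity $\sum_{\emptyset\ne S\subseteq\{1,\ldots,n\}}\prod_{j\in S}x_j=\prod_{j=1}^n(1+x_j)-1$, is correct and is exactly the standard argument for this lemma; the paper itself gives no proof and simply cites \cite{simon1985almost}, where the same term-by-term expansion is used. The one point to make explicit is your step ``$1\le D$'' for empty $A$-runs: the statement is actually false if $D<1$ is permitted (take all $A^j=0$ and $B^j=I$), so $D\ge 1$ must be read into the hypothesis (e.g.\ by allowing the empty product $j=0$), and it does hold in the paper's application, where the bound \eqref{G24} is used with $D=1$.
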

	\begin{lemma}\label{klem1}
		Assume that  $0<k\leq 10 q_n$  and  $0<j\leq C\frac{q_{n+1}}{q_n}+C$. Then for large enough $k$,  we have
		\begin{equation}
			||A_k(\theta)-A_k(\theta +jq_n\alpha)|| \leq e^{(L+\varepsilon)k }\frac{j}{q_{n+1}}.
		\end{equation}
		In particular,
		\begin{equation}
			|P_k(\theta)-P_k(\theta +jq_n\alpha)| \leq e^{(L+\varepsilon) k} \frac{j}{q_{n+1}}.
		\end{equation}
	\end{lemma}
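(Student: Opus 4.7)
My plan is to apply the Gordon-type Lemma \ref{Lemgordonidea2} to the two products $A_k(\theta)$ and $A_k(\theta+jq_n\alpha)$, exploiting the excellent rational approximation $\|q_n\alpha\|_{\R/\Z}\leq 1/q_{n+1}$ coming from the continued fraction expansion. Set $A^m=A(\theta+(m-1)\alpha)$ for $1\le m\le k$, so that $A_k(\theta)=A^k A^{k-1}\cdots A^1$ by \eqref{G.transfer}, and write $B^m=A(\theta+(m-1)\alpha+jq_n\alpha)-A^m$, so that $A_k(\theta+jq_n\alpha)=(A^k+B^k)\cdots(A^1+B^1)$. The uniform upper bound \eqref{G24} furnishes $\|A^{k-\ell+1}A^{k-\ell+2}\cdots A^{k}\|\leq e^{(L+\varepsilon)\ell}$ uniformly in $\theta$ for $\ell$ large enough, so Lemma \ref{Lemgordonidea2} applies with $D$ a harmless constant and $d=L+\varepsilon$.

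Next I would bound $\|B^m\|$. Only the $(1,1)$-entry of $A(\phi)$ depends on the phase, so
\begin{equation*}
\|B^m\| = 2|\lambda|\,\bigl|\cos 2\pi(\theta+(m-1)\alpha+jq_n\alpha) - \cos 2\pi(\theta+(m-1)\alpha)\bigr| \leq 4\pi|\lambda|\,\|jq_n\alpha\|_{\R/\Z}.
\end{equation*}
Combined with $\|q_n\alpha\|_{\R/\Z}\leq 1/q_{n+1}$ and the hypothesis $j\leq Cq_{n+1}/q_n+C$ (which ensures $j\|q_n\alpha\|_{\R/\Z}\leq j/q_{n+1}$ is well below $1/2$ for large $n$, so the triangle inequality for distances is sharp), this gives $\|B^m\|\leq C' j/q_{n+1}$ with $C'=C'(\lambda)$.

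Plugging into Lemma \ref{Lemgordonidea2},
\begin{equation*}
\|A_k(\theta+jq_n\alpha)-A_k(\theta)\| \leq De^{dk}\!\left(\prod_{m=1}^k(1+De^{-d}\|B^m\|)-1\right),
\end{equation*}
and the elementary telescoping inequality $\prod_m(1+a_m)-1\leq (\sum_m a_m)\,e^{\sum_m a_m}$ yields a clean estimate. Here $\sum_m De^{-d}\|B^m\|\leq C''\,kj/q_{n+1}$; the combined balance condition $k\leq 10q_n$ and $j\leq Cq_{n+1}/q_n+C$ forces $kj/q_{n+1}=O(1)$, so the exponential factor is merely a constant. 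The resulting bound is $\|A_k(\theta)-A_k(\theta+jq_n\alpha)\|\leq C'''\,k\,e^{(L+\varepsilon)k}\,j/q_{n+1}$, and for $k$ large the polynomial prefactor $k$ is absorbed into $e^{\varepsilon k}$, which (after renaming $\varepsilon$) is exactly the stated estimate. The ``in particular'' statement about $|P_k(\theta)-P_k(\theta+jq_n\alpha)|$ is immediate because $P_k(\theta)$ is the $(1,1)$-entry of $A_k(\theta)$ by \eqref{G34}.

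The delicate point is precisely the balance that keeps $kj/q_{n+1}$ bounded: the hypotheses on $k$ and $j$ are sharp for this purpose. Were the product $\prod(1+De^{-d}\|B^m\|)$ allowed to exceed a constant multiple of $e^{O(1)}$, the gain $j/q_{n+1}$ in the final estimate would be lost; under the stated hypotheses this gain survives intact, which is what is needed in the resonance analysis of Sections \ref{S4} and \ref{S5}.
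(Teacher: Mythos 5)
Your proposal is correct and follows essentially the same route as the paper: bound $\|A(\theta+jq_n\alpha)-A(\theta)\|$ by $Cj/q_{n+1}$ via the continued fraction estimate, feed this into Lemma \ref{Lemgordonidea2} together with \eqref{G24}, and use the elementary inequality $e^y-1\leq ye^y$ (your telescoping bound is the same estimate) plus the balance $kj/q_{n+1}=O(1)$ to absorb the polynomial factor into $e^{\varepsilon k}$, with the $P_k$ statement following from \eqref{G34}.
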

	\begin{proof}
		By the Diophantine    approximation (see \eqref{GDC2} in the Appendix), we have
		\begin{equation*}
			||{q}_n\alpha||_{\mathbb{R}/\mathbb{Z}}\leq \frac{1}{q_{n+1}} ,
		\end{equation*}
		and hence
		\begin{equation*}
			||j{q}_n\alpha||_{\mathbb{R}/\mathbb{Z}}\leq \frac{j}{q_{n+1}} .
		\end{equation*}
		This implies
		\begin{equation*}
			||A(\theta+jq_n\alpha)-A(\theta)||\leq \frac{Cj}{q_{n+1}}.
		\end{equation*}
		Applying \eqref{G24} and Lemma \ref{Lemgordonidea2},
		one has
		\begin{equation}\label{G.new21}
			||A_{k}(\theta+ j{q}_n\alpha)-A_{ k}(\theta)||\leq  e^{(L+\varepsilon)k}((1+\frac{Cj}{q_{n+1}})^{k}-1).
		\end{equation}
		Using the fact $|e^y-1|\leq ye^y$ for $y>0$,
		we obtain
		\begin{eqnarray*}
			(1+\frac{Cj}{q_{n+1}})^{k}-1  &\leq& k(1+\frac{jC}{q_{n+1}})^{k}\ln(1+\frac{Cj}{q_{n+1}})\\
			&\leq&  C\frac{kj}{q_{n+1}}.
		\end{eqnarray*}
		Combining this with (\ref{G.new21}) completes the proof.
	\end{proof}
	\section{ Resonance I: sites $jq_n+\frac{q_n}{2},j\in\Z$}\label{S4}

	In this section, we  deal with  resonances arising from sites $jq_n+\frac{q_n}{2}$, $j\in\Z$.

	Denote by 
	\begin{equation}
		\beta_j=\frac{\ln q_{n+1} -\ln (|j|+1)}{q_n} .
	\end{equation}
	
	We are going to prove
	\begin{theorem}\label{thm1}
		Let $   |j |\leq 2\frac{b_{n+1}}{q_n}+10$. Then  we have
		\begin{equation}
			r_{j+\frac{1}{2}}  \leq   \exp\{- \frac{1}{2}(L-2\beta_{j} -C\varepsilon)q_n\}(r_{j}+r_{j+1}).
		\end{equation}
		
	\end{theorem}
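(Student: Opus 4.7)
The plan is to bound $r_{j+\frac{1}{2}}$ by applying the block-decomposition identity \eqref{g100} on an interval $I=[x_1,x_2]$ of length close to $q_n$ that contains a maximizer $y_0 \in jq_n + \lfloor q_n/2 \rfloor + \{-\lfloor 10\varepsilon q_n\rfloor,\dots,\lfloor 10\varepsilon q_n\rfloor\}$ of $|\phi|$ achieving $|\phi(y_0)|=r_{j+\frac{1}{2}}$. In the ideal case, the boundary sites $x_1^{\prime}=x_1-1$ and $x_2^{\prime}=x_2+1$ lie within $10\varepsilon q_n$ of $jq_n$ and $(j+1)q_n$, so that $|\phi(x_1^{\prime})|\le r_j$ and $|\phi(x_2^{\prime})|\le r_{j+1}$ by definition, while the two shorter numerators $|P_{[x_1,y_0-1]}|$ and $|P_{[y_0+1,x_2]}|$, each of length about $q_n/2$, are bounded by $e^{(L+\varepsilon)q_n/2}$ via \eqref{Numerator1}. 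Theorem \ref{thm1} thus reduces to the lower bound $|P_I| \ge e^{Lq_n-\beta_j q_n-C\varepsilon q_n}$, i.e.\ an enhancement factor of order $(|j|+1)^{-1}q_{n+1}$ over the generic value $e^{Lq_n}$.

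To obtain this lower bound I would apply Lemma \ref{Le.Uniform} to a family of $k+1$ candidate intervals $I_m$ of common length $k$ close to $q_n$, whose starting points $x_1^{(m)}$ trace out $k+1$ consecutive integers centered near $jq_n$. The associated midpoint phases are $\theta_m = \theta + (x_1^{(m)}+(k-1)/2)\alpha$, and the lemma selects an index $m^*$ with $|P_{I_{m^*}}|\ge (k+1)^{-1}e^{kL-\mathrm{Lag}_{m^*}}$. The selected $m^*$ need not satisfy the endpoint-admissibility above; in the non-admissible cases I would use Lemma \ref{Le.resonant} to dominate $|\phi(x_1^{(m^*)\prime})|$ and $|\phi(x_2^{(m^*)\prime})|$ by linear combinations of $r_j$, $r_{j+1}$, and $r_{j+\frac{1}{2}}$. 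Since $L > 2\beta(\alpha) \ge 2\beta_j$, a direct computation shows that the coefficient of $r_{j+\frac{1}{2}}$ in the resulting inequality can be made less than $\tfrac{1}{2}$, so rearranging absorbs it into the left-hand side and yields the desired inequality in $r_j$ and $r_{j+1}$ alone.

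The main obstacle is the Lagrange estimate $\mathrm{Lag}_{m^*}\le \beta_j q_n+C\varepsilon q_n$. Using $\cos 2\pi\theta_m-\cos 2\pi\theta_{j'}=-2\sin\pi(\theta_m-\theta_{j'})\sin\pi(\theta_m+\theta_{j'})$ splits the Lagrange product into a ``frequency'' contribution from $(m-j')\alpha$ and a ``phase'' contribution from $\theta_m+\theta_{j'}$. The frequency part reduces to the Diophantine sum $\sum_{0<|l|\le q_n}-\ln||l\alpha||_{\R/\Z}$, whose only dangerous term $-\ln||q_n\alpha||_{\R/\Z}\le \ln q_{n+1}$ occurs solely for the extremal pair $|m-j'|=q_n$ and is avoided by excluding $m^*\in\{1,k+1\}$; the remainder is $O(\varepsilon q_n)$ under $\beta(\alpha)<\infty$. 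For the phase part, the condition $2\theta\in\alpha\Z+\Z$ pins $\theta_m+\theta_{j'}$ onto the orbit $\{s\alpha\pmod 1:s\in\Z\}$, and the window of $s$-values swept as $j'$ varies is an interval of length $k+1\approx q_n$ centered near $(2j+1)q_n$. Its nearest multiple of $q_n$ is $rq_n$ with $|r|\gtrsim |j|+1$, so the closest phase resonance satisfies $|\sin\pi(\theta_m+\theta_{j'})|\gtrsim (|j|+1)||q_n\alpha||_{\R/\Z}\gtrsim (|j|+1)/q_{n+1}$, contributing at most $\ln(q_{n+1}/(|j|+1))=\beta_j q_n+O(1)$ to $\mathrm{Lag}_{m^*}$. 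Combining with the frequency estimate and a careful choice of $m^*$ in the non-extremal range yields the claimed bound and closes the proof.
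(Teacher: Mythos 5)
There is a genuine gap, and it sits exactly at the point this paper was written to address. Your scheme controls the two numerators $|P_{[x_1,y_0-1]}|$, $|P_{[y_0+1,x_2]}|$ only by the trivial Lyapunov bound \eqref{Numerator1}, and you assert that in the ``non-admissible'' cases a direct computation absorbs the $r_{j+\frac12}$ term. That computation does not close. Lemma \ref{Le.Uniform} gives you no control over which index $m^*$ is selected (so you also cannot ``exclude'' extremal $m^*$ or make a ``careful choice''), and if the selected interval has an endpoint near a half-integer resonant site --- say $x_1=jq_n+t$ with $t$ close to $\tfrac{q_n}{2}$, so that $y_0\approx jq_n+\tfrac{q_n}{2}$ sits near the edge of $I_{m^*}$ --- then the relevant term is
\begin{equation*}
|P_{I_{m^*}}|^{-1}\,|P_{[y_0+1,x_2]}|\,|\phi(x_1^{\prime})|
\;\lesssim\; e^{-Lq_n+\beta_j q_n}\cdot e^{L(\frac{q_n}{2}+t)}\cdot e^{-L(\frac{q_n}{2}-t)}\,r_{j+\frac12}
\;=\;e^{-Lq_n+\beta_j q_n+2Lt}\,r_{j+\frac12},
\end{equation*}
whose coefficient is of order $e^{\beta_j q_n}\gg 1$ when $t$ is within $\sim\frac{\beta_j}{2L}q_n$ of $\tfrac{q_n}{2}$ (and symmetrically at the other edge). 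No rearrangement absorbs this, and no condition $L>2\beta_j$ helps; with only trivial numerator bounds this type of argument is known to require a strictly larger coupling (this is why earlier work stopped at $\ln|\lambda|>3\beta$ or worse).

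The paper's proof is built to survive precisely these off-center selections. It uses a shorter scale ($k\approx\tfrac34 q_n$, sampling windows $I_1$ near $0$ and $I_2$ centered at $jq_n+\lfloor\tfrac{q_n}{2}\rfloor$, with the $I_1$ alternative excluded by contradiction with $\phi(0)=1$), and, in its Cases 2 and 3, it replaces the trivial bound on the long numerator by a genuinely improved one: Lemma \ref{klem2} bounds $|P_{[-x,p_1]}|\le e^{L(\frac12 q_n-x)+C\varepsilon q_n}$ near the origin using the normalization $\phi(0)=1$, and the Gordon-type perturbation Lemma \ref{klem1} (exploiting $\|jq_n\alpha\|\le j/q_{n+1}=e^{-\beta_j q_n}$, available only because the phase is completely resonant) transports this estimate to the window near $jq_n$, giving $|P_{[x_1,p-1]}|\le e^{L(\frac12 q_n-x)+C\varepsilon q_n}+e^{L(\frac12 q_n+x)-\beta_j q_n+C\varepsilon q_n}$. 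That extra decay is exactly what your computation is missing, and without some substitute for it (you never use $\phi(0)=1$ or the translation structure at all) the inequality for $r_{j+\frac12}$ cannot be obtained at the sharp threshold $L>2\beta_j$. Your Lagrange/arithmetic analysis of $\mathrm{Lag}_{m}$ is broadly in line with Claim 1, but that part was not the obstruction.
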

	\begin{proof}
		
		Take  $\phi(j q_n+\lfloor\frac{q_n}{2}\rfloor+rq_n)$ with $|r|\leq 10\varepsilon $ into consideration.   Denote by $p=j q_n+\lfloor\frac{q_n}{2}\rfloor+rq_n$.
		Without loss of generality assume $j\geq0$.
		Let $n_0$ be the least positive integer such that
		\begin{equation*}
			q_{n-n_0}\leq  \frac{ \varepsilon}{2}  (\frac{1}{8}-2\varepsilon) q_n.
		\end{equation*}
		Let $s$ be the
		largest positive integer such that $sq_{n-n_0}\leq (\frac{1}{8}-2\varepsilon) q_n $.
		By the fact $(s+1)q_{n-n_0}\geq (\frac{1}{8}-2\varepsilon)q_n $, one has 
		\begin{equation*}
			s\geq \frac{1}{\varepsilon} 
		\end{equation*}
		and
		\begin{equation}\label{sq}
			(\frac{1}{8}-3\varepsilon)q_n \leq sq_{n-n_0}\leq (\frac{1}{8}-2\varepsilon)q_n.
		\end{equation}
		
		Set $I_1, I_2\subset \mathbb{Z}$ as follows
		\begin{eqnarray*}
			I_1 &=& [-2s q_{n-n_0},-1 ], \\
			I_2 &=& [ j q_n+\lfloor\frac{q_n}{2}\rfloor -2sq_{n-n_0},j q_n+\lfloor\frac{q_n}{2}\rfloor+2sq_{n-n_0}-1 ],
		\end{eqnarray*}
		and let $\theta_m=\theta+m\alpha$ for $m\in I_1\cup I_2$. The set $\{\theta_m\}_{m\in I_1\cup I_2}$
		consists of $6sq_{n-n_0}$ elements. Let $k=6s q_{n-n_0}-1$.
		
		By  modifying  the proof of ~\cite[Lemma 9.9]{aj09} and    ~\cite[Lemma 4.1]{lyjfa} (or Appendices in ~\cite{jl18} and ~\cite{liuetds20}), we can prove the claim (Claim 1):
	 for any   
		$  m\in  I_2$,
		\begin{equation}\label{g66}
			\rm{Lag}_m\leq  q_n (\beta_{j}+ \varepsilon) ,
		\end{equation}
		and for any
		$  m\in  I_1$,
		\begin{equation}\label{g661}
			\rm{Lag}_m\leq  q_n \varepsilon .
		\end{equation}
		For convenience, we include the proof in the Appendix.

		By Lemma \ref{Le.Uniform}, there exists some $j_0\in  I_2$ such that
		\begin{equation}\label{g109}
			|P_k(\theta_{j_0}-\frac{k-1}{2}\alpha) |\geq e^{kL-(\beta_{j} +\varepsilon)q_n}.
		\end{equation}
		or  there exists some $j_0\in  I_1$ such that
		\begin{equation}\label{g1091}
			|P_k(\theta_{j_0}-\frac{k-1}{2}\alpha) |\geq e^{kL-\varepsilon q_n}.
		\end{equation}
		
		Let 
		\begin{equation}\label{g4}
			I=[j_0-3sq_{n-n_0} +1,j_0+3sq_{n-n_0}-1]=[x_1,x_2]. 
		\end{equation}

		By \eqref{g108} and \eqref{g109},  one has that when $j_0\in I_2$,
		\begin{equation}\label{g110}
			|	P_{[x_1,x_2]}(\theta)| 	=	|P_k(\theta_{j_0}-\frac{k-1}{2}\alpha)| \geq e^{kL-(\beta_{j} +\varepsilon)q_n}.
		\end{equation}
		Recall that   $x_1^\prime=x_1-1$ and $x_2^\prime=x_2+1$.
		By   \eqref{g100}, \eqref{g110} and \eqref{sq}, one has that   when $j_0\in I_2$,
		\begin{align}
			|\phi(p)| \leq& e^{-kL+\beta_{j} q_n+\varepsilon q_n } (|P_{[x_1,p-1]}| |\phi (x_2^\prime)|+
			|P_{[p+1,x_2]} | |\phi (x_1^\prime)|)\nonumber\\
			\leq &e^{-\frac{3}{4}Lq_n+\beta_{j} q_n+C\varepsilon  q_n } (|P_{[x_1,p-1]}| |\phi (x_2^\prime)|+
			|P_{[p+1,x_2]} | |\phi (x_1^\prime)|).\label{g2}
		\end{align}	
		
		
		Clearly,
		\begin{equation*}
			I_1\subset [-\frac{q_n}{4},0]
		\end{equation*}
		and
		\begin{equation*}
			I_2\subset [jq_n+\frac{q_n}{4} ,jq_n+\frac{3}{4} q_n].
		\end{equation*}
		
		{\bf Case 1: }  $j_0\in [jq_n+\frac{3}{8} q_n, jq_n+\frac{5}{8}q_n]\cap I_2$

		In this case, by \eqref{sq} and \eqref{g4}, one has  that
		\begin{equation}\label{g250}
			x_1\in [jq_n+6\varepsilon q_n,jq_n+\frac{1}{4}q_n+9\varepsilon q_n+1], 	x_2\in [jq_n+\frac{3}{4}q_n-9\varepsilon q_n-1,jq_n+q_n-6\varepsilon q_n].
		\end{equation}
		In order to make  the following arguments neat, 
		we are not going to  make the difference between $a\in\Z$ and $a^\prime\in\Z$ if  $|a-a^\prime|\leq 50\varepsilon q_n$.
		For example, instead of using \eqref{g250}, we simply  write
		\begin{equation}
			x_1\in [jq_n,jq_n+\frac{1}{4}q_n], 	x_2\in [jq_n+\frac{3}{4}q_n,jq_n+q_n].
		\end{equation}

		\begin{center}
			\begin{tikzpicture}[thick, scale=1]
				\draw[-](0,0)--(4.75,0); 		\draw[-](7.25,0)--(12,0); 	
				\draw[red,-](1,0)--(3.5,0); 
				\draw[blue,-](7.25,0)--(4.75,0); 
				\draw[red,-](8.5,0)--(11,0); 
				\draw[-](6,0)--(6,-0.2);
				\node [below]  at (6,-0.2){$jq_n+\frac{q_n}{2}$};
				\draw[-](1,0)--(1,-0.2);
				\node [below]  at (1,-0.2){$jq_n$};
				\draw[-](11,0)--(11,-0.2);
				\node [below]  at (11,-0.2){$jq_n+q_n$};
				\draw[-](3.5,0)--(3.5,0.2);
				\node [above]  at (3.5,0.2){$\frac{q_n}{4}$};
				\draw[-](8.5,0)--(8.5,0.2);
				\node [above]  at (8.5,0.2){$\frac{3q_n}{4}$};
				
				\draw[-](4.75,0)--(4.75,0.2);
				\node [above]  at (4.75,0.2){$\frac{3q_n}{8}$};
				\draw[-](7.25,0)--(7.25,0.2);
				\node [above]  at (7.25,0.2){$\frac{5q_n}{8}$};
				\node [above]  at (6,0.5){$j_0$};
				\draw[->] (6,0.5)--(5.8,0);
				
				\node [above]  at (2,0.5){$x_1$};
				\draw[->] (2,0.5)--(1.8,0);
				\node [above]  at (10,0.5){$x_2$};
				\draw[->] (10,0.5)--(10.2,0);

				\node [below] at (6,-1){Fig.1};

			\end{tikzpicture}
		\end{center}
		In Fig.1, $x_1$ and $x_2$ locate at the red part. $j_0$ locates at the blue part. The  numbers above  are  the sites after deducting  $jq_n$.

		By \eqref{Numerator1}, one has 
		\begin{equation}\label{g1}
			|P_{[x_1,p-1]}|\leq e^{( L+\varepsilon)|p-x_1|},  
		\end{equation}
		and
		\begin{equation}\label{g0}
			|P_{[p+1,x_2]}|\leq e^{( L+\varepsilon)|x_2-p|}.
		\end{equation}
		By \eqref{g2}, \eqref{g1} and \eqref{g0},  one has that 
		\begin{equation}\label{Iterationr_j}
			|\phi(p)|  \leq \sum_{i=1,2} e^{( \beta_{j} +C\varepsilon)q_{n}}|\phi(x_i^\prime)|e^{-|p-x_i|L }.
		\end{equation}
		By Lemma \ref{Le.resonant}, one has
		\begin{equation}\label{g6}
			|\phi(x_1^\prime)\leq r_{j} e^{-L (x_1-jq_n)+C\varepsilon q_n} +r_{j+\frac{1}{2}} e^{-L (jq_n+\frac{q_n}{2} -x_1)+C\varepsilon q_n} 
		\end{equation}
		and 
		\begin{equation}\label{g5}
			|\phi(x_2^\prime)\leq r_{j+\frac{1}{2}} e^{-L (x_2-jq_n-\frac{q_n}{2})+C\varepsilon q_n} +r_{j+1}e^{-L (jq_n+q_n-x_2)+C\varepsilon q_n} .
		\end{equation}
		By \eqref{Iterationr_j}, \eqref{g6}, \eqref{g5}  and the fact that $|x_i-p|\geq \frac{q_n}{4}-C\varepsilon q_n$, $i=1,2$, 
		we have
		\begin{eqnarray}
			r_{j+\frac{1}{2}} &\leq&  e^{-\frac{1}{2}(L-2\beta_{j}- C\varepsilon)q_n} r_{j}+e^{-\frac{1}{2}(L-2\beta_{j}- C\varepsilon)q_n}r_{j+1} \nonumber\\
			& & + e^{-\frac{1}{2}(L-2\beta_{j}- C\varepsilon)q_n} r_{j+\frac{1}{2}}. \label{g22}
		\end{eqnarray}
		Since $\varepsilon$ is small and $L>2\beta \geq 2\beta_{j}-\varepsilon$, one has that 
		\begin{equation*}
			e^{-\frac{1}{2}(L-2\beta_{j}- C\varepsilon)q_n} r_{j+\frac{1}{2}}\leq \frac{1}{2}r_{j+\frac{1}{2}}.
		\end{equation*}
		Therefore (\ref{g22}) becomes
		\begin{equation}\label{g23}
			r_{j+\frac{1}{2}}\leq  \exp\{-\frac{1}{2}(L-2\beta_{j}- C\varepsilon)q_n\} (r_{j}+r_{j+1}).
		\end{equation}
		
		{\bf Case 2:} $j_0\in [jq_n+\frac{1}{4}q_n,jq_n+\frac{3}{8}q_n]\cap I_2$
		
		Let $x=jq_n+\frac{3}{8}q_n-j_0$. Therefore, $0\leq x\leq \frac{1}{8}q_n$,
		$$x_1=jq_n-x\in[jq_n-\frac{1}{8}q_n, jq_n]$$ and 
		$$x_2=jq_n+\frac{3}{4} q_n-x\in [jq_n+\frac{5}{8}q_n, jq_n+\frac{3}{4}q_n] .$$
		\begin{center}
			\begin{tikzpicture}[thick, scale=1]
				\draw[-](0,0)--(12,0); 		\draw[-](7.25,0)--(12,0); 	
				\draw[red,-](1,0)--(2.25,0); 
				\draw[blue,-](6,0)--(4.75,0); 
				\draw[red,-](8.5,0)--(9.75,0); 
				\draw[-](6,0)--(6,-0.2);
				\node [below]  at (6,-0.2){$jq_n+\frac{3q_n}{8}$};
				\draw[-](1,0)--(1,0.2);
				\node [above]  at (1,0.2){$-\frac{q_n}{8}$};
				\draw[-](2.25,0)--(2.25,-0.2);
				\node [below]  at (2.25,-0.2){$jq_n$};
				\draw[-](11,0)--(11,-0.2);
				\node [below]  at (11,-0.2){$jq_n+\frac{7q_n}{8}$};
				\draw[-](3.5,0)--(3.5,0.2);
				\node [above]  at (3.5,0.2){$\frac{q_n}{8}$};
				\draw[-](8.5,0)--(8.5,0.2);
				\node [above]  at (8.5,0.2){$\frac{5q_n}{8}$};
				\draw[-](9.75,0)--(9.75,0.2);
				\node [above]  at (9.75,0.2){$\frac{3q_n}{4}$};
				
				\draw[-](4.75,0)--(4.75,0.2);
				\node [above]  at (4.75,0.2){$\frac{q_n}{4}$};
				\draw[-](7.25,0)--(7.25,0.2);
				\node [above]  at (7.25,0.2){$\frac{q_n}{2}$};
				\node [above]  at (5.2,1){$j_0$};
				\draw[->] (5.2,1)--(5,0);
				
				\node [above]  at (1.5,1){$x_1$};
				\draw[->] (1.5,1)--(1.3,0);
				\node [above]  at (8.6,1){$x_2$};
				\draw[dashed] (8.8,0)--(8.8,-0.5);
				\draw[dashed] (9.75,0)--(9.75,-0.5);
				\node [below]  at (9.3,-0.05){$x$};
				\draw [->]  (9.4,-0.3)--(9.75,-0.3);
				\draw [->]  (9.2,-0.3)--(8.8,-0.3);
				\draw[->] (8.6,1)--(8.8,0);

				\node [below]  at (1.8,-0.05){$x$};

				\node [below]  at (5.5,-0.05){$x$};
				
				\node [below] at (6,-1){Fig.2};

			\end{tikzpicture}
		\end{center}

		In Fig.2, $x_1$ and $x_2$ locate at the red part. $j_0$ locates at the blue part. The  numbers above are  the sites after deducting  $jq_n$.

		By Lemma \ref{Le.resonant}, one has that
		\begin{equation}\label{g7}
			|\phi(x_1^\prime)\leq r_{j} e^{-L  x+C\varepsilon q_n} +r_{j-\frac{1}{2}} e^{-L (\frac{q_n}{2} -x)+C\varepsilon q_n} 
		\end{equation}
		and 
		\begin{equation}\label{g8}
			|\phi(x_2^\prime)\leq r_{j+\frac{1}{2}} e^{-L (\frac{q_n}{4}-x) +C\varepsilon q_n} +r_{j+1}e^{-L (\frac{q_n}{4}+x) +C\varepsilon q_n} .
		\end{equation}
		
		By \eqref{g0}  and \eqref{g7}, one has
		\begin{align}
			e^{-\frac{3}{4}Lq_n+\beta_{j} q_n+C\varepsilon q_n }  &
			|P_{[p+1,x_2]} | |\phi (x_1^\prime)|	\nonumber	\\
			\leq   &   	e^{-\frac{3}{4}Lq_n+\beta_{j} q_n +C\varepsilon q_n}  e^{ L(\frac{q_n}{4}-x)}   ( r_{j} e^{-L  x} +r_{j-\frac{1}{2}} e^{-L (\frac{q_n}{2} -x)} )   	\nonumber	\\
			\leq &	e^{-\frac{1}{2}Lq_n-2Lx+\beta_{j} q_n+ C\varepsilon q_n }
			r_{j}+ e^{- L q_n+\beta_{j} q_n+  C\varepsilon q_n}  r_{j-\frac{1}{2}} \nonumber\\
			\leq &    e^{-\frac{1}{2}(L-2\beta_{j}- C\varepsilon)q_n}r_{j} + e^{- L q_n+\beta_{j} q_n+  C\varepsilon q_n}  r_{j-\frac{1}{2}}.\label{g9}
		\end{align}
		By \eqref{g500}, one has 
		\begin{equation}\label{g19}
			r_{j-\frac{1}{2}}\leq e^{C\varepsilon q_n+\frac{q_n}{2} L} r_j. 
		\end{equation}
		By  \eqref{g9} and \eqref{g19}, one has
		\begin{equation}\label{g21}
			e^{-\frac{3}{4}Lq_n+\beta_{j} q_n+C\varepsilon  q_n }  
			|P_{[p+1,x_2]}| |\phi (x_1^\prime)|\leq   e^{-\frac{1}{2}(L-2\beta_{j}- C\varepsilon)q_n} r_{j} .
		\end{equation}
		By Lemma \ref{klem2}
		\begin{equation}\label{g16}
			|P_{[x_1-jq_n,p-1-jq_n]}| \leq e^{C\varepsilon q_n} e^{L(\frac{1}{2}q_n-x)}.
		\end{equation}
		By Lemma \ref{klem1}  and \eqref{g16}, one has
		\begin{align}
			|P_{[x_1,p-1]}| &\leq  	|P_{[x_1,p-1]}-P_{[x_1-jq_n,p-1-jq_n]}|+|P_{[x_1-jq_n,p-1-jq_n]}|\nonumber\\
			&\leq e^{C\varepsilon q_n} e^{L(\frac{1}{2}q_n-x)}+e^{C\varepsilon q_n} e^{L |p-x_1|} e^{-\beta_{j} q_n}\nonumber\\
			&\leq 	e^{C\varepsilon q_n} e^{L(\frac{1}{2}q_n-x)}+e^{C\varepsilon q_n} e^{L(\frac{1}{2}q_n+x)  -\beta_{j} q_n}.
			\label{g17}
		\end{align}
		By \eqref{g8} and \eqref{g17},  one has
		\begin{align}
			e^{-\frac{3}{4}Lq_n+\beta_{j} q_n+C\varepsilon q_n } &|P_{[x_1,p-1]}| |\phi (x_2^\prime)| \nonumber\\
			\leq& e^{-\frac{3}{4}Lq_n+\beta_{j} q_n+C\varepsilon q_n }(	  r_{j+\frac{1}{2}} e^{-L (\frac{q_n}{4}-x) } +r_{j+1}e^{-L (\frac{q_n}{4}+x)} )  \nonumber \\
			& \times (e^{L(\frac{1}{2}q_n-x)}+ e^{L(\frac{1}{2}q_n+x)  -\beta_{j} q_n}) \nonumber\\
			\leq&  e^{C\varepsilon q_n}(e^{-(L-2\beta_{j})\frac{q_n}{2}  }+e^{  -L(\frac{q_n}{2}-2x)}) r_{j+\frac{1}{2}}\nonumber \\
			&+ e^{C\varepsilon q_n}(e^{-(L-2\beta_{j})\frac{q_n}{2}  -2L x}+e^{  -L\frac{q_n}{2}}) r_{j+1}\nonumber\\
			\leq & e^{C\varepsilon q_n}(e^{-(L-2\beta_{j})\frac{q_n}{2}  }+e^{  -L\frac{q_n}{4}}) r_{j+\frac{1}{2}}+  e^{-(L-2\beta_{j}-C\varepsilon)\frac{q_n}{2}  }  r_{j+1}\label{g18},
		\end{align}
		where the last inequality holds  by the fact that $0\leq x\leq \frac{q_n}{8}$.
		
		By \eqref{g2}, \eqref{g21} and \eqref{g18}, one has
		
		\begin{eqnarray}
			r_{j+\frac{1}{2}} &\leq&  e^{-\frac{1}{2}(L-2\beta_{j}- C\varepsilon)q_n} r_{j}+e^{-\frac{1}{2}(L-2\beta_{j}- C\varepsilon)q_n}r_{j+1} \nonumber\\
			& &+ e^{C\varepsilon q_n}(e^{-(L-2\beta_{j})\frac{q_n}{2}  }+e^{  -L\frac{q_n}{4}}) r_{j+\frac{1}{2}}. \label{g24}
		\end{eqnarray}
		
		Since $\varepsilon$ is small and $L>2\beta \geq 2\beta_{j}-\varepsilon$, one has that 
		\begin{equation*}
			e^{C\varepsilon q_n}(e^{-(L-2\beta_{j})\frac{q_n}{2}  }+e^{  -L\frac{q_n}{4}})  r_{j+\frac{1}{2}}\leq \frac{1}{2}	r_{j+\frac{1}{2}}.
		\end{equation*}
		
		Therefore,  (\ref{g24}) becomes
		\begin{equation}\label{g25}
			r_{j+\frac{1}{2}}\leq  \exp\{-\frac{1}{2}(L-2\beta_{j}- C\varepsilon)q_n\}(r_j+r_{j+1}).
		\end{equation}
		
		{\bf Case 3: } $j_0\in [jq_n + \frac{5}{8}q_n,jq_n+\frac{3}{4}q_n]\cap I_2$

		Let $x=j_0-jq_n -\frac{5}{8}q_n$. Therefore, $0\leq x\leq \frac{1}{8}q_n$,  $x_1=jq_n +\frac{q_n}{4}+x$ and $x_2=jq_n+ q_n+x$.
		
		By Lemmas \ref{klem2} and  \ref{klem1},  one has
		\begin{align}
			|	P_{[p+1,x_2]} | 
			&\leq 	|	P_{{[p+1-jq_n-q_n,x_2-jq_n-q_n]}} | +e^{C\varepsilon q_n} e^{L(\frac{1}{2}q_n+x)  -\beta_{j} q_n} \nonumber\\
			&\leq e^{C\varepsilon q_n} e^{L(\frac{1}{2}q_n-x)}+e^{C\varepsilon q_n} e^{L(\frac{1}{2}q_n+x)  -\beta_{j} q_n}	\label{g171}
		\end{align}
		Replacing \eqref{g17} with \eqref{g171} and following the proof of Case 2, we also have that 
		\begin{equation}\label{g26}
			r_{j+\frac{1}{2}}\leq  \exp\{-\frac{1}{2}(L-2\beta_{j}- C\varepsilon)q_n\}(r_j+r_{j+1}).
		\end{equation}

		If $j_0\in I_1$,  one has that 
		$$x_1\in[-\frac{5}{8}q_n,-\frac{3}{8}q_n],x_2\in [\frac{1}{8}q_n, \frac{3}{8}q_n].$$
		By \eqref{Numerator1}, one has
		\begin{equation}\label{gg100}
			|P_{[x_1,-1]}| \leq e^{ L|x_1|+C\varepsilon q_n},  	|P_{[1,x_2]} | \leq e^{ L|x_2|+C\varepsilon q_n}.
		\end{equation}
		By  \eqref{g100},  \eqref{g1091}, \eqref{sq},  \eqref{g68} and \eqref{gg100}, one has 
		\begin{align}
			|\phi(0)| \leq& e^{-kL +\varepsilon q_n } (|P_{[x_1,-1]}| |\phi (x_2^\prime)|+
			|P_{[1,x_2]} | |\phi (x_1^\prime)|)\nonumber\\
			\leq &e^{-\frac{3}{4}Lq_n+C\varepsilon  q_n } (e^{L|x_1|} +e^{L|x_2|})\nonumber\\
			\leq &\frac{1}{2} \nonumber.
		\end{align}	
		This is impossible since $\phi(0)=1$. Therefore, we must have  $j_0\in I_2$ and Theorem \ref{thm1} follows from \eqref{g23}, \eqref{g25} and \eqref{g26}.

	\end{proof}
	
	\section{Resonance II: sites $jq_n$, $j\in\Z\backslash \{0\}$}\label{S5}
	In this section, we   deal with  resonances arising from sites $jq_n$, $j\in\Z\backslash\{0\}$.
	\begin{theorem}\label{thm2}
		Assume $j\in\Z$ satisfies $ 0<  |j |\leq 2\frac{b_{n+1}}{q_n}+10$. Then  we have
		\begin{equation}\label{g1000}
			r_{j}  \leq   \exp\{- \frac{1}{2}(L-2\beta_{j} -C\varepsilon)q_n\}(r_{j+\frac{1}{2}}+r_{j-\frac{1}{2}})+ \exp\{- (L-2\beta_{j} -C\varepsilon)q_n\}(r_{j+1}+r_{j-1}).
		\end{equation}
		
	\end{theorem}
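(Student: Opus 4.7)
The plan is to run the skeleton of the proof of Theorem \ref{thm1}, with the target site shifted from $jq_n+\frac{q_n}{2}$ to $jq_n$. Fix $p=jq_n+rq_n$ with $|r|\le 10\varepsilon$ realizing $r_j$; WLOG $j\ge 0$. Choose $n_0,s$ exactly as before, so that $sq_{n-n_0}\in[(\frac{1}{8}-3\varepsilon)q_n,(\frac{1}{8}-2\varepsilon)q_n]$, and set
\begin{equation*}
I_1=[-2sq_{n-n_0},-1],\qquad I_2=[jq_n-2sq_{n-n_0},\,jq_n+2sq_{n-n_0}-1],
\end{equation*}
with $k=6sq_{n-n_0}-1$. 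The appendix argument for the $\mathrm{Lag}_m$ estimate is unaffected by this shift of $I_2$, yielding $\mathrm{Lag}_m\le q_n(\beta_j+\varepsilon)$ on $I_2$ and $\mathrm{Lag}_m\le q_n\varepsilon$ on $I_1$. Lemma \ref{Le.Uniform} then produces $j_0\in I_1\cup I_2$ with a large lower bound on $|P_k(\theta_{j_0}-\frac{k-1}{2}\alpha)|$, and the alternative $j_0\in I_1$ is ruled out exactly as in Theorem \ref{thm1} via the $\phi(0)=1$ contradiction using \eqref{g100} and \eqref{Numerator1}. Setting $I=[x_1,x_2]=[j_0-3sq_{n-n_0}+1,\,j_0+3sq_{n-n_0}-1]$, one has $|I|\approx\frac{3q_n}{4}$ and $|P_{[x_1,x_2]}|\ge e^{kL-(\beta_j+\varepsilon)q_n}$, so \eqref{g100} delivers the master inequality
\begin{equation*}
|\phi(p)|\le e^{-kL+(\beta_j+\varepsilon)q_n}\bigl(|P_{[x_1,p-1]}||\phi(x_2')|+|P_{[p+1,x_2]}||\phi(x_1')|\bigr).
\end{equation*}

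The analysis then splits into three cases by the location of $j_0$ in $I_2$, paralleling Cases 1--3 of the proof of Theorem \ref{thm1}. In the central case $j_0\in[jq_n-\frac{q_n}{8},jq_n+\frac{q_n}{8}]$, both $x_1'$ and $x_2'$ sit in halves adjacent to $jq_n$; Lemma \ref{Le.resonant} expresses $|\phi(x_1')|$ in terms of $r_{j-1/2},r_j$ and $|\phi(x_2')|$ in terms of $r_{j+1/2},r_j$. Substituting \eqref{Numerator1} for $|P_{[x_1,p-1]}|,|P_{[p+1,x_2]}|$ and tracking exponents as in Theorem \ref{thm1}, each contribution receives coefficient $\exp\{-\frac{1}{2}(L-2\beta_j-C\varepsilon)q_n\}$; the $r_{j\pm 1/2}$ contributions match the target in \eqref{g1000}, and the $r_j$ contributions are absorbed.

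The two outer cases, $j_0\in[jq_n-\frac{q_n}{4},jq_n-\frac{q_n}{8}]$ and $j_0\in[jq_n+\frac{q_n}{8},jq_n+\frac{q_n}{4}]$, are treated symmetrically; consider the former. Here $x_1'$ sits in the first half of $[(j-1)q_n,(j-\frac{1}{2})q_n]$ close to the half-resonance, and $x_2'$ sits in the first half of $[jq_n,(j+\frac{1}{2})q_n]$. A direct exponent count shows that the $r_{j-1}$ contribution from $|\phi(x_1')|$ carries coefficient $\exp\{-(L-\beta_j-C\varepsilon)q_n\}$ (the combination $-kL+L(x_2-p)-Ld_{j-1}$ telescopes to $-Lq_n$), already beating the target $\exp\{-(L-2\beta_j-C\varepsilon)q_n\}$, and the $r_{j-1/2}$ and $r_{j+1/2}$ contributions again receive $\exp\{-\frac{1}{2}(L-2\beta_j-C\varepsilon)q_n\}$. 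The problematic term is the $r_j$ contribution from $|\phi(x_2')|$: the naive bound \eqref{Numerator1} on $|P_{[x_1,p-1]}|$ produces an $r_j$-coefficient of order $\exp\{\beta_j q_n-Lq_n/4\}$, which is not absorbable under the mere assumption $L>2\beta$. The remedy is to replace \eqref{Numerator1} by the Gordon-type comparison of Lemma \ref{klem1}, shifting the interval by $-(j-1)q_n$ (possibly after a further subdivision of the outer range of $j_0$ so the translated left-endpoint lies in the klem2 window), and then bounding the shifted determinant by Lemma \ref{klem2}. This introduces the needed $\frac{Lq_n}{2}$ cancellation; the Gordon error $\frac{|j|+1}{q_{n+1}}\lesssim e^{-\beta_j q_n}$ combines with the $e^{(\beta_j+\varepsilon)q_n}$ prefactor to leave an $r_j$-coefficient $\le\frac{1}{2}$, absorbable. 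The symmetric outer case produces the $r_{j+1}$ contribution by the same mechanism.

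The main technical obstacle is the orchestration in the outer cases: after the $\mp(j\mp 1)q_n$ Gordon shift one must ensure the translated interval actually fits the scope of Lemma \ref{klem2} (possibly via a further subdivision of the $j_0$ ranges or a mild generalization of Lemma \ref{klem2} handling slightly larger values of $p_1$), so that the $\frac{Lq_n}{2}$ cancellation occurs and the Gordon error exactly matches the $e^{(\beta_j+\varepsilon)q_n}$ prefactor. Once the three cases are assembled and the $r_j$ contributions are absorbed via $L>2\beta\ge 2\beta_j-\varepsilon$, inequality \eqref{g1000} follows.
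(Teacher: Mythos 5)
Your plan breaks at its very first step: the claim that the appendix estimate is ``unaffected by this shift of $I_2$'', so that one still has $\mathrm{Lag}_m\le(\beta_j+\varepsilon)q_n$ on $I_2$ and $\mathrm{Lag}_m\le\varepsilon q_n$ on $I_1$, is false. The whole point of the completely resonant phase is that at the sites $jq_n$ the phase and frequency resonances overlap. With $I_1\approx[-\tfrac{q_n}{4},-1]$ and $I_2$ centered at $jq_n$ of radius $\approx\tfrac{q_n}{4}$, take $m$ in the left half of $I_2$: the choice $\ell=m-jq_n\in I_1$ gives $\|(m-\ell)\alpha\|_{\R/\Z}=\|jq_n\alpha\|_{\R/\Z}\simeq |j|/q_{n+1}$, a frequency resonance of size $e^{-\beta_j q_n}$, while the reflection of $m$ about $jq_n$ gives an $\ell'\in I_2$ whose phase factor is $\simeq\|2jq_n\alpha\|_{\R/\Z}$, a phase resonance of the same size; similarly every $m\in I_1$ sees a frequency resonance at $\ell=m+jq_n\in I_2$ and a phase resonance at another point of $I_2$. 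Hence in your configuration the correct bound is $\mathrm{Lag}_m\le 2(\beta_j+\varepsilon)q_n$, on $I_1$ as well as on (at least half of) $I_2$, and Lemma \ref{Le.Uniform} only guarantees a $j_0$ for which this weaker bound is available. With $k\approx\tfrac34 q_n$ this is fatal: in your central case the $r_{j\pm\frac12}$ coefficient becomes $e^{2\beta_j q_n-\frac{L}{2}q_n}$ (the total decay distance to the half-resonant site is only $\tfrac{q_n}{2}$), so closing the estimate would require $L>4\beta_j$ rather than the sharp $L>2\beta$; the same doubling destroys the $\phi(0)=1$ contradiction for $j_0\in I_1$, which in Theorem \ref{thm1} relied on $\mathrm{Lag}_m\le\varepsilon q_n$ there.

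This is exactly why the paper's proof of Theorem \ref{thm2} does not reuse the scale of Theorem \ref{thm1}: it takes $sq_{n-n_0}\approx(\tfrac14-2\varepsilon)q_n$, hence $k\approx\tfrac32 q_n$, accepts $\mathrm{Lag}_m\le 2(\beta_j+\varepsilon)q_n$ on all of $I_1\cup I_2$ (Claim 2), and compensates the doubled resonance strength by the longer block, yielding the prefactor $e^{-\frac32 Lq_n+2\beta_j q_n}$ in \eqref{12g2}--\eqref{2g2}. Even then the naive bound \eqref{Numerator1} on $P_{[x_1,p-1]}$ is insufficient for the $r_{j\pm\frac12}$ terms, and the paper runs a four-case analysis in which $|P_{[x_1,p-1]}|\,r_{j+\frac12}$ is controlled by testing $\phi$ near $jq_n+\tfrac{q_n}{2}$ on the shifted block $[x_1+q_n,p-1+q_n]$ together with the Gordon comparison of Lemma \ref{klem1}; the case $j_0\in I_1$ is excluded by noting that the same estimates applied with $j=0$ force $|\phi(0)|\le\tfrac12$. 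Your appeal to Lemmas \ref{klem1} and \ref{klem2} in the outer cases is in this spirit, but it is grafted onto the wrong interval length and the wrong Lagrange input, so the exponent bookkeeping cannot close at the threshold $L>2\beta$; repairing the proposal essentially forces you back to the paper's $\tfrac32 q_n$-scale argument.
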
	
	
	\begin{proof}
		Take  $\phi(j q_n+rq_n)$ with $|r|\leq 10\varepsilon $ into consideration. Denote by $p=jq_n+rq_n$.
		Without loss of generality assume $j>0$.
		Let $n_0$ be the least positive integer such that
		\begin{equation*}
			q_{n-n_0}\leq  \frac{ \varepsilon}{2}  (\frac{1}{4}-2\varepsilon) q_n.
		\end{equation*}
		Let $s$ be the
		largest positive integer such that $sq_{n-n_0}\leq (\frac{1}{4}-2\varepsilon) q_n $.
		By the fact that $(s+1)q_{n-n_0}\geq (\frac{1}{4}-2\varepsilon)q_n $, one has 
		\begin{equation*}
			s\geq \frac{1}{\varepsilon} 
		\end{equation*}
		and
		\begin{equation}\label{2sq}
			(\frac{1}{4}-3\varepsilon)q_n \leq sq_{n-n_0}\leq (\frac{1}{4}-2\varepsilon)q_n.
		\end{equation}

		Set $I_1, I_2\subset \mathbb{Z}$ as follows
		\begin{eqnarray*}
			I_1 &=& [- 2s q_{n-n_0},-1], \\
			I_2 &=& [ j q_n  -2sq_{n-n_0},j q_n+ 2sq_{n-n_0}-1 ],
		\end{eqnarray*}
		and let $\theta_m=\theta+m\alpha$ for $m\in I_1\cup I_2$. The set $\{\theta_m\}_{m\in I_1\cup I_2}$
		consists of $6sq_{n-n_0}$ elements. Let $k=6s q_{n-n_0}-1$.
		
		By  modifying  the proof of ~\cite[Lemma 9.9]{aj09} and   ~\cite[Lemma 4.1]{lyjfa} (or Appendices in ~\cite{jl18} and ~\cite{liuetds20}), we can prove the claim (Claim 2):
		for any
		$  m\in I_1\cup I_2$, one has ${\rm Lag}a_m\leq  2q_n (\beta_{j}+ \varepsilon) $.
		We also give the proof in the Appendix.
		
		Applying   Lemma  \ref{Le.Uniform}, there exists some $j_0$ with  $j_0\in I_1\cup I_2$
		such that
		\begin{equation}\label{g121}
			|P_k(\theta_{j_0}-\frac{k-1}{2}\alpha)| \geq e^{kL-2(\beta_{j} +\varepsilon)q_n}.
		\end{equation}

		Let 
		\begin{equation}\label{2g4}
			I=[j_0-3sq_{n-n_0} +1,j_0+3sq_{n-n_0}-1]=[x_1,x_2]. 
		\end{equation}


		By  \eqref{g100}, \eqref{g121}  and \eqref{2sq},   one has
		\begin{itemize}
			\item I: $j_0\in I_1$.  Then for any $y_1$ with $|y_1|\leq 10 \varepsilon q_n$,
			\begin{align}
				|\phi(y_1)| \leq &e^{-kL+2\beta_{j} q_n+\varepsilon q_n } (|P_{[x_1,y_1-1]}| |\phi (x_2^\prime)|+
				|P_{[y_1+1,x_2]} | |\phi (x_1^\prime)|)\nonumber\\
				\leq &e^{-\frac{3}{2}Lq_n+2\beta_{j} q_n+ C\varepsilon q_n } (|P_{[x_1,y_1-1]}| |\phi (x_2^\prime)|+
				|P_{[y_1+1,x_2]} | |\phi (x_1^\prime)|).\label{12g2}
			\end{align}
			
			\item  II: $j_0\in I_2$.  Then for any $y_2$ with $|y_2-jq_n|\leq 10 \varepsilon q_n$,
			\begin{equation}\label{2g2}
				|\phi(y_2)| \leq e^{-\frac{3}{2}Lq_n+2\beta_{j} q_n+C\varepsilon q_n } (|P_{[x_1,y_2-1]}| |\phi (x_2^\prime)|+
				|P_{[y_2+1,x_2]} | |\phi (x_1^\prime)|).
			\end{equation}
		\end{itemize}
		Clearly,
		\begin{equation*}
			I_1\subset [-\frac{q_n}{2},0]
		\end{equation*}
		and
		\begin{equation*}
			I_2\subset [jq_n-\frac{q_n}{2} ,jq_n+\frac{q_n}{2}].
		\end{equation*}
		{\bf Case 1: } $j_0\in [jq_n-\frac{1}{4} q_n, jq_n]\cap I_2$
		
		
		Let $x=j_0-(jq_n-\frac{1}{4}q_n)$. Therefore, $0\leq x\leq \frac{1}{4}q_n$ and
		$$ x_1=jq_n-q_n+x \in[jq_n-q_n,jq_n-\frac{3}{4}q_n]$$ and 
		$$x_2=jq_n+\frac{1}{2}q_n+x \in [jq_n+\frac{1}{2}q_n,jq_n+\frac{3}{4}q_n].$$
		\begin{center}
			\begin{tikzpicture}[thick, scale=1]
				\draw[-](0,0)--(12,0); 		\draw[-](7.25,0)--(12,0); 	
				\draw[red,-](1,0)--(2.25,0); 
				\draw[blue,-](6,0)--(4.75,0); 
				\draw[red,-](8.5,0)--(9.75,0); 
				\draw[-](6,0)--(6,-0.2);
				\node [below]  at (6,-0.2){$jq_n$};
				\draw[-](1,0)--(1,-0.2);
				\node [below]  at (1,-0.2){$jq_n-q_n$};
				\draw[-](2.25,0)--(2.25,0.2);
				\node [above]  at (2.25,0.2){$-\frac{3q_n}{4}$};
				\draw[-](11,0)--(11,-0.2);
				\node [below]  at (11,-0.2){$jq_n+q_n$};
				\draw[-](3.5,0)--(3.5,0.2);
				\node [above]  at (3.5,0.2){$-\frac{q_n}{2}$};
				\draw[-](8.5,0)--(8.5,0.2);
				\node [above]  at (8.5,0.2){$\frac{q_n}{2}$};
				\draw[-](9.75,0)--(9.75,0.2);
				\node [above]  at (9.75,0.2){$\frac{3q_n}{4}$};
				
				\draw[-](4.75,0)--(4.75,0.2);
				\node [above]  at (4.75,0.2){$-\frac{q_n}{4}$};
				\draw[-](7.25,0)--(7.25,0.2);
				\node [above]  at (7.25,0.2){$\frac{q_n}{4}$};
				\node [above]  at (5.2,1){$j_0$};
				\draw[->] (5.2,1)--(5.5,0);
				
				\node [above]  at (1.5,1){$x_1$};
				\draw[->] (1.5,1)--(1.8,0);
				\node [above]  at (9.5,1){$x_2$};
				\draw[->] (9.5,1)--(9.3,0);
				
				\node [below]  at (9,0){$x$};

				\node [above]  at (1.4,0){$x$};

				\node [above]  at (5.1,0){$x$};
				
				\node [below] at (6,-1){Fig.3};

			\end{tikzpicture}
		\end{center}

		In Fig.3, $x_1$ and $x_2$ locate at the red part. $j_0$ locates at the blue part. The  numbers above are  the sites after deducting  $jq_n$.

		By Lemma \ref{Le.resonant}, one has that
		\begin{equation}\label{2g7}
			|\phi(x_1^\prime)\leq r_{j-1} e^{-L  x+C\varepsilon q_n} +r_{j-\frac{1}{2}} e^{-L (\frac{q_n}{2} -x)+C\varepsilon q_n} 
		\end{equation}
		and 
		\begin{equation}\label{2g8}
			|\phi(x_2^\prime)\leq r_{j+\frac{1}{2}} e^{-L x +C\varepsilon q_n} +r_{j+1}e^{-L (\frac{q_n}{2}-x) +C\varepsilon q_n} .
		\end{equation}
		By \eqref{Numerator1}, one has 
		\begin{equation}\label{2g1}
			|P_{[x_1,p-1]}|\leq e^{ L(q_n-x)+C\varepsilon q_n},  
		\end{equation}
		and
		\begin{equation}\label{2g0}
			|P_{[p+1,x_2]}|\leq e^{ L(\frac{q_n}{2}+x)+C\varepsilon q_n}.
		\end{equation}
		
		By \eqref{2g2}-\eqref{2g0},
		one has  that
		\begin{align}
			|\phi(p)|\leq& e^{-\frac{3}{2}Lq_n+2\beta_{j} q_n+C\varepsilon q_n }	|P_{[x_1,p-1]}|(r_{j+\frac{1}{2}} e^{-L x } +r_{j+1}e^{-L (\frac{q_n}{2}-x)} )\nonumber\\
			&+e^{-\frac{3}{2}Lq_n+2\beta_{j} q_n+C\varepsilon q_n }	|P_{[p+1,x_2]}|(r_{j-1} e^{-L  x} +r_{j-\frac{1}{2}} e^{-L (\frac{q_n}{2} -x)}  ) \nonumber\\
			\leq& e^{-\frac{3}{2}Lq_n+2\beta_{j} q_n+C\varepsilon q_n }(	|P_{[x_1,p-1]}|r_{j+\frac{1}{2}} e^{-L x } + e^{L(q_n-x)}r_{j+1}e^{-L (\frac{q_n}{2}-x)} )\nonumber\\
			&+e^{-\frac{3}{2}Lq_n+2\beta_{j} q_n+C\varepsilon q_n }e^{ L(\frac{q_n}{2}+x)}(r_{j-1} e^{-L  x} +r_{j-\frac{1}{2}} e^{-L (\frac{q_n}{2} -x)}  ) \nonumber\\
			\leq &e^{-L q_n+2\beta_{j} q_n+C\varepsilon q_n} (r_{j+1}+r_{j-1})+e^{-\frac{3}{2}L q_n+2\beta_{j} q_n +2L x+C\varepsilon q_n} r_{j-\frac{1}{2}}\nonumber\\
			&+e^{-\frac{3}{2}L q_n-L x+2\beta_{j} q_n+C\varepsilon q_n}|P_{[x_1,p-1]}|r_{j+\frac{1}{2}}\nonumber\\
			\leq& e^{-L q_n+2\beta_{j} q_n+C\varepsilon q_n} (r_{j+1}+r_{j-1}+r_{j-\frac{1}{2}})\label{g501}\\
			&+e^{-\frac{3}{2}L q_n-L x+2\beta_{j} q_n+C\varepsilon q_n}|P_{[x_1,p-1]}|r_{j+\frac{1}{2}},\label{g41}
		\end{align}
		where the last inequality holds by the fact that $0\leq x\leq \frac{1}{4}q_n$.
		
		The  term in \eqref{g501} decays, so we are going to bound \eqref{g41}.
		
		\begin{center}
			\begin{tikzpicture}[thick, scale=1.5]
				\draw[-](0,0)--(7,0); 		 
				\draw[red,-](1,0)--(2.25,0);

				\draw[-](6,0)--(6,-0.2);
				\node [below]  at (6,-0.2){$jq_n+q_n$};
				\node [above]  at (6,0){$p+q_n$};
				\draw[-](1,0)--(1,-0.2);
				\node [below]  at (1,-0.2){$jq_n$};
				\draw[-](2.25,0)--(2.25,-0.2);

				\draw[-](3.5,0)--(3.5,-0.2);
				\node [above]  at (3.5,0){$y$};
				\node [above]  at (3.5,-0.6){$jq_n+\frac{q_n}{2}$};

				\draw[-](4.75,0)--(4.75,-0.2);
				\node [above]  at (4.75,-0.6){$jq_n+\frac{3q_n}{4}$};

				\node [above]  at (1.5,1){$x_1+q_n$};
				\draw[->] (1.5,1)--(1.8,0);

				\node [above]  at (1.4,0){$x$};

				\node [below] at (3,-1){Fig.4};

			\end{tikzpicture}
		\end{center}

		Clearly, $x_1+q_n\in[jq_n, jq_n+\frac{1}{4}q_n]$. 
		By Lemma \ref{Le.resonant}, one has that
		\begin{equation}\label{2g30}
			|\phi(x_1'+q_n)\leq r_{j} e^{-L  x+C\varepsilon q_n} +r_{j+\frac{1}{2}} e^{-L (\frac{q_n}{2} -x)+C\varepsilon q_n} .
		\end{equation}
		By \eqref{g100},  \eqref{Numerator1} and \eqref{2g30}, one has 
		for  any $y$ with $|y-jq_n-\frac{1}{2} q_n|\leq 10 \varepsilon q_n$,
		\begin{align*}
			|\phi(y)|\leq&  |P_{[x_1+q_n,p-1+q_n]}|^{-1}  e^{L \frac{q_n}{2}+C\varepsilon q_n} (r_{j} e^{-L  x} +r_{j+\frac{1}{2}} e^{-L (\frac{q_n}{2} -x)})\\
			&+|P_{[x_1+q_n,p-1+q_n]}|^{-1} e^{L(\frac{q_n}{2}-x+C\varepsilon q_n)} r_{j+1}.
		\end{align*}
		This implies
		\begin{align*}
			r_{j+\frac{1}{2}}\leq&  |P_{[x_1+q_n,p-1+q_n]}|^{-1}  e^{L \frac{q_n}{2}+C\varepsilon q_n} (r_{j} e^{-L  x} +r_{j+\frac{1}{2}} e^{-L (\frac{q_n}{2} -x)})\\
			&+|P_{[x_1+q_n,p-1+q_n]}|^{-1} e^{L(\frac{q_n}{2}-x+C\varepsilon q_n)} r_{j+1}.
		\end{align*}
		Therefore, we  have
		\begin{align}
			r_{j+\frac{1}{2}}&	|P_{[x_1+q_n,p-1+q_n]}|\nonumber\\
			\leq &e^{ C\varepsilon q_n} (r_{j} e^{L(\frac{q_n}{2}- x)} +r_{j+\frac{1}{2}} e^{Lx}+  e^{L(\frac{q_n}{2}-x)} r_{j+1}). \label{g37}
		\end{align}
		
		By Lemma \ref{klem1}, one has 
		\begin{equation}\label{g38}
			| P_{[x_1+q_n,p-1+q_n]}- P_{[{x}_1,p-1]}|\leq e^{(L+C\varepsilon) q_n-Lx-\beta q_n}.
		\end{equation}
		By \eqref{g37} and \eqref{g38}, one has
		\begin{align}
			|P_{[{x}_1,p-1]}|	&r_{j+\frac{1}{2}} \nonumber\\
			\leq & e^{ L(\frac{q_n}{2}-x)+ C\varepsilon q_n} (r_{j}  +   r_{j+1})  + e^{C\varepsilon q_n}( e^{Lx}+e^{(L-\beta) q_n-Lx} )r_{j+\frac{1}{2}}\label{g39}
		\end{align}
		Therefore,
		\begin{align}
			|P_{[x_1,p-1]}|r_{j+\frac{1}{2}}  &	e^{-\frac{3}{2}L q_n-L x+2\beta_{j} q_n+C\varepsilon q_n}\nonumber\\
			\leq & e^{C\varepsilon q_n}(e^{- (L-2\beta_{j}  ) \frac{1}{2}q_n-2L x} +e^{-\frac{3}{2}L q_n+2\beta_{j} q_n+C\varepsilon q_n}) r_{j+\frac{1}{2}} \nonumber\\
			&+e^{-L q_n+2\beta_{j} q_n-2L x+C\varepsilon q_n}(r_j+r_{j+1})\nonumber\\
			\leq &e^{C\varepsilon q_n} (r_j e^{-(L-2\beta _{j} ) q_n}+r_{j+\frac{1}{2}} e^{-(L-2\beta _{j} ) \frac{q_n}{2}}+ r_{j+1} e^{-(L-2\beta _{j} ) q_n})\label{g40}
		\end{align}
		
		By \eqref{g40}, \eqref{g501} and \eqref{g41}, one has 
		\begin{equation}\label{g42}
			|\phi(p)| \leq  e^{-(L-2\beta _{j} ) q_n+C\varepsilon q_n} (r_{j-1}+r_{j-\frac{1}{2}}+r_j+r_{j+1})+r_{j+\frac{1}{2}} e^{-(L-2\beta _{j} -C\varepsilon) \frac{q_n}{2}}.
		\end{equation}
		Therefore,
		\begin{equation}\label{g43}
			r_{j}\leq  e^{-(L-2\beta _{j} ) q_n+C\varepsilon q_n} (r_{j-1}+r_{j-\frac{1}{2}}+r_j+r_{j+1})+r_{j+\frac{1}{2}} e^{-(L-2\beta _{j} -C\varepsilon) \frac{q_n}{2}}.
		\end{equation}
		By the fact that $L >2\beta \geq 2\beta_{j}-\varepsilon$, we have 
		$$ e^{-(L-2\beta _{j} ) q_n+C\varepsilon q_n} r_j\leq\frac{1}{2}r_j $$
		and hence
		\begin{equation}\label{g44}
			r_{j}\leq  e^{-(L-2\beta _{j} -C\varepsilon) q_n} (r_{j-1}+r_{j-\frac{1}{2}}+r_{j+1})+r_{j+\frac{1}{2}} e^{-(L-2\beta _{j} -C\varepsilon) \frac{q_n}{2}}.
		\end{equation}
		
		{\bf Case 2}:	   $j_0\in [jq_n, jq_n+\frac{1}{4} q_n]\cap I_2$
		
		In this case, by the similar  proof of \eqref{g44}, we  have
		
		\begin{equation}\label{g45}
			r_{j}\leq  e^{-(L-2\beta _{j} -C\varepsilon) q_n} (r_{j+1}+r_{j+\frac{1}{2}}+r_{j-1})+r_{j-\frac{1}{2}} e^{-(L-2\beta _{j} -C\varepsilon) \frac{q_n}{2}}.
		\end{equation}
		
		{\bf Case 3}:		$j_0\in[jq_n-\frac{1}{2}q_n, jq_n-\frac{1}{4}q_n]\cap I_2$
		
		In this case, let $x=j_0-(jq_n-\frac{1}{2}q_n)$.  It is easy to see  that $0\leq x\leq \frac{q_n}{4}$,
		$$x_1=jq_n-\frac{5q_n}{4}+x \in[jq_n-\frac{5}{4}q_n, jq_n-q_n]$$ and 
		$$x_2=jq_n+\frac{1}{4}q_n+x\in [jq_n+\frac{q_n}{4}, jq_n+\frac{q_n}{2}].$$

		\begin{center}
			\begin{tikzpicture}[thick, scale=1]
				\draw[-](0,0)--(12,0); 		\draw[-](7.25,0)--(12,0); 	
				\draw[red,-](1,0)--(2.25,0); 
				\draw[blue,-](6,0)--(4.75,0); 
				\draw[red,-](8.5,0)--(9.75,0); 
				\draw[-](6,0)--(6,0.2);
				\node [above]  at (6,0.2){$-\frac{q_n}{4}$};
				\draw[-](1,0)--(1,0.2);
				\node [above]  at (1,0.2){$-\frac{5q_n}{4}$};
				\draw[-](2.25,0)--(2.25,-0.2);
				\node [below]  at (2.25,-0.2){$jq_n-q_n$};
				
				\draw[-](3.5,0)--(3.5,0.2);
				\node [above]  at (3.5,0.2){$-\frac{3q_n}{4}$};
				\draw[-](8.5,0)--(8.5,0.2);
				\node [above]  at (8.5,0.2){$\frac{q_n}{4}$};
				\draw[-](9.75,0)--(9.75,0.2);
				\node [above]  at (9.75,0.2){$\frac{q_n}{2}$};
				
				\draw[-](4.75,0)--(4.75,0.2);
				\node [above]  at (4.75,0.2){$-\frac{q_n}{2}$};
				\draw[-](7.25,0)--(7.25,-0.2);
				\node [below]  at (7.25,-0.2){$jq_n$};
				\node [above]  at (5.2,1){$j_0$};
				\draw[->] (5.2,1)--(5.6,0);
				
				\node [above]  at (1.5,1){$x_1$};
				\draw[->] (1.5,1)--(1.8,0);
				\node [above]  at (9.6,1){$x_2$};
				\draw[->] (9.6,1)--(9.4,0);
				
				\node [above]  at (9.1,0){$x$};

				\node [above]  at (1.4,0){$x$};

				\node [above]  at (5.3,0){$x$};
				
				\node [below] at (6,-1){Fig.5};

			\end{tikzpicture}
		\end{center}

		By Lemma \ref{Le.resonant}, one has that
		\begin{equation}\label{2g44}
			|\phi(x_1^\prime)\leq r_{j-1} e^{-L  (\frac{q_n}{4}-x)+C\varepsilon q_n} +r_{j-\frac{3}{2}} e^{-L (\frac{q_n}{4} +x)+C\varepsilon q_n} 
		\end{equation}
		and 
		\begin{equation}\label{2g45}
			|\phi(x_2^\prime)\leq r_{j} e^{-L (\frac{q_n}{4} +x) +C\varepsilon q_n} +r_{j+\frac{1}{2}}e^{-L (\frac{q_n}{4}-x) +C\varepsilon q_n} .
		\end{equation}
		
		By   \eqref{2g2},  \eqref{2g44} and \eqref{2g45}, one has
		\begin{align}
			r_j&\leq e^{-\frac{3}{2} L q_n+2\beta_{j} q_n+C\varepsilon q_n}|P_{[p+1,x_2]}| (e^{-L (\frac{q_n}{4}-x) } r_{j-1}+ e^{-L (\frac{q_n}{4}+x) } r_{j-\frac{3}{2}})\nonumber\\
			&+e^{-\frac{3}{2} L q_n+2\beta_{j} q_n+C\varepsilon q_n}|P_{[x_1,p-1]}|  (e^{-L (\frac{q_n}{4}+x) } r_{j}+ e^{-L (\frac{q_n}{4}-x) } r_{j+\frac{1}{2}}).\label{g63}
		\end{align}
		
		We are going to bound 
		\begin{equation}\label{gg102}
			e^{-\frac{3}{2} L q_n+2\beta_{j} q_n+C\varepsilon q_n}|P_{[p+1,x_2]}| (e^{-L (\frac{q_n}{4}-x) } r_{j-1}+ e^{-L (\frac{q_n}{4}+x) } r_{j-\frac{3}{2}})
		\end{equation}
		and 
		\begin{equation}\label{gg101}
			e^{-\frac{3}{2} L q_n+2\beta_{j} q_n+C\varepsilon q_n}|P_{[x_1,p-1]}|  (e^{-L (\frac{q_n}{4}+x) } r_{j}+ e^{-L (\frac{q_n}{4}-x) } r_{j+\frac{1}{2}})
		\end{equation}
		separately.
		
		By  \eqref{Numerator1}, one has
		\begin{equation}\label{g47}
			|P_{[p+1,x_2]}|\leq e^{L(\frac{1}{4}q_n+x) +C\varepsilon q_n}.
		\end{equation}
		By   \eqref{g47}, one has
		\begin{align}
			e^{-\frac{3}{2} L q_n+2\beta_{j} q_n} &|P_{[p+1,x_2]}|  (e^{-L (\frac{q_n}{4}-x) } r_{j-1}+ e^{-L (\frac{q_n}{4}+x) } r_{j-\frac{3}{2}})\nonumber\\
			\leq	&  e^{-\frac{3}{2} L q_n+2\beta_{j} q_n+2Lx+C\varepsilon q_n}   r_{j-1}+ e^{-\frac{3}{2} L q_n+2\beta_{j} q_n+C\varepsilon q_n}  r_{j-\frac{3}{2}}.\label{g49}
		\end{align}
		By \eqref{g500}, one has 
		\begin{equation}\label{g50}
			r_{j-\frac{3}{2}}\leq e^{C\varepsilon q_n+\frac{q_n}{2} L} r_{j-1}. 
		\end{equation}
		By \eqref{g49}, \eqref{g50} and the fact that $0\leq x\leq \frac{1}{4}q_n$, 
		one has
		\begin{equation}\label{g51}
			e^{-\frac{3}{2} L q_n+2\beta_{j} q_n}|P_{[p+1,x_2]}|  (e^{-L (\frac{q_n}{4}-x) } r_{j-1}+ e^{-L (\frac{q_n}{4}+x) } r_{j-\frac{3}{2}})\leq e^{-(L-2\beta_{j} ) q_n+C\varepsilon q_n} r_{j-1}.
		\end{equation}
		We finish the estimate of \eqref{gg102}.
		Now  we are in the position to bound \eqref{gg101}.

		By \eqref{g102}, one has
		\begin{equation}\label{g52}
			|P_{[x_1-(j-1)q_n,p-1-(j-1)q_n]}|\leq e^{L (\frac{3q_n}{4}+x)+C\varepsilon q_n}.
		\end{equation}
		By Lemma \ref{klem1} and \eqref{g52}, we have
		\begin{equation}\label{g53}
			|P_{[x_1,p-1]}|\leq e^{L (\frac{3q_n}{4}+x)+C\varepsilon q_n}+e^{L(\frac{5q_n}{4}-x)-\beta_{j}q_n+C\varepsilon q_n}.
		\end{equation}
		By \eqref{g53}, one has
		\begin{align}
			e^{-\frac{3}{2} L q_n+2\beta_{j} q_n} &e^{-L ( \frac{q_n}{4}+x) }|P_{[x_1,p-1]}|  r_{j}\nonumber\\
			\leq &	e^{-\frac{3}{2} L q_n+2\beta_{j} q_n+C\varepsilon q_n}(e^{L (\frac{3q_n}{4}+x)}+e^{L(\frac{5q_n}{4}-x)-\beta_{j}q_n }) e^{-L (\frac{q_n}{4}+x) } r_{j}\nonumber\\
			\leq &e^{-(L-2\beta_{j})q_n+C\varepsilon q_n} r_j+e^{-(L-2\beta_{j})\frac{q_n}{2}-2L x+C\varepsilon q_n} r_j\nonumber\\
			\leq &  e^{-(L-2\beta_{j})\frac{q_n}{2}+C\varepsilon q_n} r_j.\label{g55}
		\end{align}
		
		This implies that the first term in \eqref{gg101} decays. We are going to bound the remaining term in \eqref{gg101}:
		$$	e^{-\frac{3}{2} L q_n+2\beta_{j} q_n}   e^{-L (\frac{q_n}{4}-x) }|P_{[x_1,p-1]}|   r_{j+\frac{1}{2}}.$$

		By Lemma \ref{klem1}, one has 
		\begin{equation}\label{2g38}
			 |P_{[x_1+q_n,p-1+q_n]}- P_{[{x}_1,p-1]}| \leq e^{L (\frac{5}{4}q_n-x)-\beta q_n+C\varepsilon q_n}.
		\end{equation}
		By \eqref{2g38},  we have 
		\begin{align}
			e^{-\frac{3}{2} L q_n+2\beta_{j} q_n}   &e^{-L (\frac{q_n}{4}-x) }|P_{[x_1,p-1]}|   r_{j+\frac{1}{2}}\nonumber\\
			\leq& e^{-\frac{3}{2} L q_n+2\beta_{j} q_n}|P_{[x_1+q_n,p-1+q_n]}|    e^{-L (\frac{q_n}{4}-x) } r_{j+\frac{1}{2}}\nonumber \\
			&+ e^{L (\frac{5}{4}q_n-x)-\beta q_n+C\varepsilon q_n}e^{-\frac{3}{2} L q_n+2\beta_{j} q_n}    e^{-L (\frac{q_n}{4}-x) } r_{j+\frac{1}{2}} \nonumber\\
			\leq & e^{-\frac{3}{2} L q_n+2\beta_{j} q_n}  e^{-L (\frac{q_n}{4}-x) } |P_{[x_1+q_n,p-1+q_n]}|    r_{j+\frac{1}{2}}  +e^{- (L-2\beta_{j}-C\varepsilon)\frac{q_n}{2}}r_{j+\frac{1}{2}}.\label{g61}
		\end{align}
		The second term in \eqref{g61} decays, so 
		we are going to bound  the first term in \eqref{g61}: $e^{-\frac{3}{2} L q_n+2\beta_{j} q_n}  e^{-L (\frac{q_n}{4}-x) } |P_{[x_1+q_n,p-1+q_n]}|    r_{j+\frac{1}{2}}  $.
		
		Clearly, $x_1+q_n\in[jq_n-\frac{1}{4}q_n, jq_n]$.  	By Lemma \ref{Le.resonant}, one has that
		\begin{equation}\label{22g30}
			|\phi(x_1^\prime+q_n)|\leq r_{j} e^{-L  (\frac{q_n}{4}-x)+C\varepsilon q_n} +r_{j-\frac{1}{2}} e^{-L (\frac{q_n}{4} +x)+C\varepsilon q_n} .
		\end{equation}
		\begin{center}
			\begin{tikzpicture}[thick, scale=1.5]
				\draw[-](0,0)--(8,0); 		 
				\draw[red,-](1,0)--(2.25,0);

				\draw[-](7.25,0)--(7.25,-0.2);
				\node [below]  at (7.25,-0.2){$jq_n+q_n$};
				\node [above]  at (7.25,0){$p+q_n$};
				\draw[-](1,0)--(1,-0.2);
				\node [below]  at (1,-0.2){$jq_n-\frac{q_n}{4}$};
				\draw[-](2.25,0)--(2.25,-0.2);
				\node [below]  at (2.25,-0.2){$jq_n$};

				\draw[-](3.5,0)--(3.5,-0.2);
				
				\node [above]  at (3.5,-0.6){$jq_n+\frac{q_n}{4}$};

				\node [above]  at (4.75,0){$y$};
				\draw[-](4.75,0)--(4.75,-0.2);
				\node [above]  at (4.75,-0.6){$jq_n+\frac{q_n}{2}$};

				\node [above]  at (1.5,1){$x_1+q_n$};
				\draw[->] (1.5,1)--(1.8,0);

				\node [above]  at (1.4,0){$x$};

				\node [below] at (4,-1){Fig.6};

			\end{tikzpicture}
		\end{center}
	 
		By \eqref{g100},  \eqref{Numerator1} and \eqref{22g30}, one has 
		for  any $y$ with $|y-jq_n-\frac{1}{2} q_n|\leq 10 \varepsilon q_n$,
		\begin{align}
			|\phi(y)|\leq & |P_{[x_1+q_n,p-1+q_n]}|^{-1} |P_{[y+1,p+q_n]}| (r_{j} e^{-L  (\frac{q_n}{4}-x) +C\varepsilon q_n } +r_{j-\frac{1}{2}} e^{-L (\frac{q_n}{4} +x)+C\varepsilon q_n} )\nonumber\\
			&+|P_{[x_1+q_n,p+q_n]}|^{-1}|P_{[x_1+q_n,y-1]}|   r_{j+1}\nonumber\\
			\leq &  |P_{[x_1+q_n,p-1+q_n]}|^{-1}  e^{L \frac{q_n}{2}+C\varepsilon q_n} (r_{j} e^{-L  (\frac{q_n}{4}-x)  } +r_{j-\frac{1}{2}} e^{-L (\frac{q_n}{4} +x)} )\nonumber\\
			&+|P_{[x_1+q_n,p-1+q_n]}|^{-1}e^{\frac{3}{4}Lq_n-Lx +C\varepsilon q_n}  r_{j+1}.\label{g56}
		\end{align}

		This implies
		\begin{equation} \label{g58}
		r_{j+\frac{1}{2}}	|P_{[x_1+q_n,p-1+q_n]}| 
			\leq    e^{L \frac{q_n}{4}+C\varepsilon q_n} (r_{j} e^{Lx } +r_{j-\frac{1}{2}} e^{-Lx} )+  e^{\frac{3}{4} Lq_n-Lx+C\varepsilon q_n} r_{j+1}.
		\end{equation}

		By \eqref{g58}, one has 
		\begin{align}
			e^{-\frac{3}{2}L q_n +2\beta_{j} q_n}e^{-L(\frac{q_n}{4}-x)}& 	|P_{[x_1+q_n,p-1+q_n]}	|r_{j+\frac{1}{2}}\nonumber\\
			\leq&e^{- \frac{3}{2} L q_n+2\beta_{j}q_n+2L x+C\varepsilon q_n} r_j 
			+e^{-\frac{3}{2}L q_n+2\beta_{j} q_n+C\varepsilon q_n}r_{j-\frac{1}{2}}\nonumber\\
			&+e^{-  L q_n+2\beta_{j}q_n +C\varepsilon q_n}r_{j+1}\nonumber\\
			\leq & e^{-L q_n+2\beta_{j}q_n+C\varepsilon q_n}r_{j+1}+e^{- L q_n+2\beta_{j} q_n+C\varepsilon q_n}r_{j},\label{g60}
		\end{align}
		where the last inequality holds by the fact that $r_{j-\frac{1}{2}}\leq e^{\frac{q_n}{2}L+C\varepsilon q_n} r_j$  and $0\leq x\leq \frac{q_n}{4}$.
		
		By \eqref{g61} and \eqref{g60}, one has
		\begin{align}
			e^{-\frac{3}{2} L q_n+2\beta_{j} q_n}&|P_{[x_1,p-1]}|   e^{-L (\frac{q_n}{4}-x) } r_{j+\frac{1}{2}}\nonumber\\
			\leq &e^{-L q_n+2\beta_{j}q_n+C\varepsilon q_n}(r_{j+1}+r_j)+e^{- (L-2\beta_{j}-C\varepsilon)\frac{q_n}{2}}r_{j+\frac{1}{2}}. \label{g62}
		\end{align}

		By \eqref{g63}, \eqref{g51}, \eqref{g55} and \eqref{g62}
		\begin{align}
			r_j \leq &e^{-(L  -2\beta_{j}-C\varepsilon)q_n}   (r_{j-1}+r_{j+1}) +  e^{-(L-2\beta_{j})\frac{q_n}{2}+C\varepsilon q_n} r_{j+\frac{1}{2}}\nonumber\\
			&+(e^{-L q_n+2\beta_{j}q_n+C\varepsilon q_n}+e^{- \frac{1}{2}L q_n+\beta_{j} q_n+C\varepsilon q_n}) r_{j}.\label{g128}
		\end{align}
		Since 
		\begin{equation*}
			(e^{-L q_n+2\beta_{j}q_n+C\varepsilon q_n}+e^{- \frac{1}{2}L q_n+\beta_{j} q_n+C\varepsilon q_n}) r_{j}\leq \frac{1}{2}r_j,
		\end{equation*}
		by \eqref{g128}, one has
		\begin{equation}\label{g70}
			r_j\leq 
			e^{-(L  -2\beta_{j}-C\varepsilon)q_n}   (r_{j-1}+r_{j+1}) +  e^{-(L-2\beta_{j}-C\varepsilon)\frac{q_n}{2}} r_{j+\frac{1}{2}}.
		\end{equation}
		
		{\bf Case 4}:		$j_0\in[jq_n+\frac{1}{4}q_n, jq_n+\frac{1}{2}q_n]\cap I_2$
		
		In this case, following the proof of Case 3, we have 
		\begin{equation}\label{g71}
			r_j\leq e^{-(L  -2\beta_{j}-C\varepsilon)q_n}   (r_{j-1}+r_{j+1}) +  e^{-(L-2\beta_{j}-C\varepsilon)\frac{q_n}{2}} r_{j-\frac{1}{2}}.
		\end{equation}

		If  $j_0\in I_1$,  then  \eqref{g44}, \eqref{g45}, (\ref{g70})  and \eqref{g71} hold  for  $j=0$.   Therefore,   we have
		\begin{equation*}
			|\phi(0)|\leq \frac{1}{2}.
		\end{equation*}
		This is in contradiction with  $\phi(0)=1$.
		Therefore  we must have $j_0\in I_2$ and \eqref{g1000} follows from \eqref{g44}, \eqref{g45}, (\ref{g70})  and \eqref{g71}.
	\end{proof}
	
	\section{Proof of  Theorem \ref{mainthm}}\label{S6}
	
	Once we have Theorems \ref{thm1} and \ref{thm2} at hand,  Theorem \ref{mainthm} follows from standard iterations. 
	See ~\cite{jl18,liuetds20} for example.  For the convenience, we  include a proof here.
	\begin{proof}[\bf Proof of  Theorem \ref{mainthm}]
		
		Without loss of generality, we only bound $\phi(k)$ with $k>0$. Let $n$ be such that  $b_n\leq k<b_{n+1}$.
		Clearly, $\beta_j\leq \beta+\varepsilon$.
		By  Theorems \ref{thm1} and \ref{thm2},  we have for any $j$ with $ 1\leq   j  \leq 2\frac{b_{n+1}}{q_n}+10 $,  
		\begin{equation}\label{G.r_j2}
			r_{j-\frac{1}{2} }\leq  \exp\{ - \frac{1}{2}(L-2\beta-C\varepsilon) q_n \} \max\{r_{j-1},r_{j}\} ,
		\end{equation}
		and
		\begin{equation}\label{G.r_j1}
			r_{j }\leq  \max_{t\in O}\{\exp\{ -|t|(L-2\beta-C\varepsilon) q_n \} r_{j+t}\} ,
		\end{equation}
		where $O=\{ \pm1,\pm\frac{1}{2}$\}.
		
		Suppose  $1 \leq \ell \leq \frac{b_{n+1}}{q_n}+4 $.
		Let $j=\ell$ in   (\ref{G.r_j1}) and (\ref{G.r_j2}),   and iterate     $ 2\ell$  times or until $j\leq 1$,
		we obtain
		\begin{equation}\label{radd1}
			r_{\ell} \leq  (2\ell+2)q_n\exp\{- (L-2\beta -C\varepsilon) \ell q_n\},
		\end{equation}
		and
		\begin{equation}\label{radd2}
			r_{\ell-\frac{1}{2}}\leq  (2\ell+2)q_n\exp\{- (L-2\beta -C\varepsilon) (\ell -\frac{1}{2})q_n\}.
		\end{equation}
		Notice that  we have used the fact that $|r_{j}|\leq (j+1)q_n$ and $|r_{j-\frac{1}{2}}|\leq (j+1)q_n$.

		{\bf 	Case 1:}  $k\geq \frac{q_n}{4}$  and $  {\rm dist}(k, q_n\mathbb{Z}+\frac{q_n}{2}\mathbb{Z} )\leq     10\varepsilon q_n$.
		
		In this case, applying \eqref{radd1} and \eqref{radd2}, one has
		\begin{equation}\label{Decayoct1}
			| \phi(k)| ,  | \phi(k-1)|\leq   \exp\{- (L-2\beta -C\varepsilon)k\}.
		\end{equation}
		
		{\bf Case 2}:  others
		
		Applying  Lemma \ref{Le.resonant}  with sufficiently small $\varepsilon$,  and by \eqref{radd1} and \eqref{radd2}, 
		one also has
		\begin{equation}\label{Decayoct2}
			| \phi(k)|, | \phi(k-1)|\leq   \exp\{- (L-2\beta -C\varepsilon )k\}.
		\end{equation}
		We finish the proof.

	\end{proof}	
	
	\appendix
	\section{Proof of  Claims 1 and  2}
	Let $ \frac{p_n}{q_n}$ be the continued fraction approximations to $\alpha$, then
	\begin{equation}\label{GDC1}
		\forall 1\leq k <q_{n+1},  \text{dist}( k\alpha,\mathbb{Z})\geq  |q_n\alpha -p_n|,
	\end{equation}
	and
	\begin{equation}\label{GDC2}
		\frac{1}{2q_{n+1}}\leq|q_n\alpha-p_n| \leq\frac{1}{q_{n+1}}.
	\end{equation}
	
	\begin{proof}[\bf Proof of Claim 1]
		
		By the construction of $I_1$ and $I_2$ in Claim 1,   \eqref{GDC1} and \eqref{GDC2},
		we have  that
		\begin{itemize}
			\item  for any $m \in I_1$,
			\begin{equation}\label{Appenoct1}
				\min_{\ell \in I_1\cup I_2}\ln |\sin \pi(2\theta +(\ell +  m)\alpha)|\geq -C\ln q_n,
			\end{equation}
			and
			\begin{equation}\label{Appenoct2}
				\min_{\ell \neq m\atop \ell \in I_1\cup I_2} \ln |\sin
				\pi (\ell-m)\alpha) | \geq -C\ln q_n;
			\end{equation}
			\item 
			for any $m\in I_2$,
			\begin{equation}\label{Appenoct3}
				\min_{\ell \in I_1\cup I_2}\ln |\sin \pi(2\theta +(\ell +  m)\alpha)|_{\mathbb{R}/\mathbb{Z}}\geq  -\beta_{j} q_n-C\ln q_n,
			\end{equation}
			and
			\begin{equation}\label{Appenoct4}
				\min_{\ell \neq m\atop \ell \in I_1\cup I_2} \ln |\sin
				\pi (\ell-m )\alpha) | \geq -C\ln q_n.
			\end{equation}
		\end{itemize}
		
		We should mention that, for each $m\in I_2$, there is at most one $ \ell\in I_1\cup I_2$ such that the lower bound of \eqref{Appenoct3} can be achieved.
		
		Once we have \eqref{Appenoct1}-\eqref{Appenoct4} at hand, 
		by the standard arguments (e.g.  Appendices in ~\cite{jl18,liuetds20}), we have that  for any $m\in I_1$,
		\begin{equation*}
			{\rm Lag}_m\leq \varepsilon q_n 
		\end{equation*}
		and for any $m\in I_2$,
		\begin{equation*}
			{\rm Lag}_m\leq \beta_j q_n+ \varepsilon q_n.
		\end{equation*}
	\end{proof}
	
	\begin{proof}[\bf Proof of Claim 2]
		
		By the construction of $I_1$ and  $I_2$  in Claim 2,   \eqref{GDC1} and \eqref{GDC2},
		we have  that for $m \in I_1 \cup [jq_n-2s q_{n-n_0},jq_n-1]$,
		\begin{equation}\label{Appenoct11}
			\min_{\ell \in I_1\cup I_2 }\ln |\sin \pi(2\theta +(\ell +  m)\alpha)|\geq \beta_{j} q_n-C\ln q_n,
		\end{equation}
		and
		\begin{equation}\label{Appenoct21}
			\min_{\ell \neq m\atop \ell \in I_1\cup I_2} \ln |\sin
			\pi (\ell-m)\alpha) | \geq  \beta_{j} q_n-C\ln q_n.
		\end{equation}
		We should mention that, for each $m \in I_1 \cup [jq_n-2s q_{n-n_0},jq_n-1]$, there is at most one $ \ell \in I_1\cup I_2$ such that the lower bound of \eqref{Appenoct11} or  \eqref{Appenoct21}  can be achieved.
		
			We also  have  that for $m \in[jq_n, jq_n+2 sq_{n-n_0}-1]$,
		\begin{equation}\label{Appenoct111}
		\min_{\ell \in I_1\cup I_2 }\ln |\sin \pi(2\theta +(\ell +  m)\alpha)|\geq \beta_{j} q_n-C\ln q_n,
		\end{equation}
		and
		\begin{equation}\label{Appenoct211}
		\min_{\ell \neq m\atop \ell \in I_1\cup I_2} \ln |\sin
		\pi (\ell-m)\alpha) | \geq   -C\ln q_n.
		\end{equation}
		Moreover,  for  each $m \in[jq_n, jq_n+2 sq_{n-n_0}-1]$,   there is at most two $ \ell \in I_1\cup I_2$ such that the lower bound of \eqref{Appenoct111}.
		Once we have \eqref{Appenoct11}-\eqref{Appenoct211} at hand, 
		by the standard arguments (e.g. Appendices in ~\cite{jl18,liuetds20}),
		we have  that for any $m\in I_1\cup I_2$,
		\begin{equation*}
			{\rm Lag}_m\leq 2\beta_{j}q_n+ \varepsilon q_n.
		\end{equation*}
		
	\end{proof}

	\section*{Acknowledgments}
	The author  was  supported by   NSF DMS-2000345 and  DMS-2052572.
	

\end{document}